\theoremstyle{plain}
\newtheorem{theorem}{Theorem}
\newtheorem{proposition}[theorem]{Proposition}
\newtheorem{definition}[theorem]{Definition}
\newtheorem{corollary}[theorem]{Corollary}
\theoremstyle{definition}
\newtheorem{example}{Example}
\newtheorem{construction}{Construction}
\newtheorem{remark}{Remark}
\def\BibTeX{{\rm B\kern-.05em{\sc i\kern-.025em b}\kern-.08em
    T\kern-.1667em\lower.7ex\hbox{E}\kern-.125emX}}
\begin{document}
\title{Rack-Aware Regenerating Codes with Multiple Erasure Tolerance\thanks{This paper was presented in part at the 2021 International Symposium on Information Theory \cite{ZhouarXiv20}}}
%
%
%

\author{Liyang Zhou,
        Zhifang Zhang
\thanks{Liyang Zhou and Zhifang Zhang are both with KLMM, Academy of Mathematics and Systems Science, Chinese Academy of Sciences, and
School of Mathematical Sciences, University of Chinese Academy of Sciences, Beijing, China. e-mail: zhouliyang17@mails.ucas.ac.cn,~~zfz@amss.ac.cn}}

\maketitle

\begin{abstract}
We study the rack-aware storage system where all storage nodes are organized in racks and within each rack the nodes can communicate freely without taxing the system bandwidth.
Rack-aware regenerating codes (RRCs) were proposed for minimizing the repair bandwidth for single erasures. In the initial setting of RRCs, the repair of a single node requires the participation of all the remaining nodes in the rack containing the failed node as well as a large number of helper racks containing no failures. Consequently, the repair may be infeasible in front of multiple node failures.
In this work, a relaxed repair model
that can tolerate multiple node failures by simultaneously reducing the intra-rack connections and cross-rack connections is proposed. A tradeoff between the storage and repair bandwidth
under the relaxed repair model is derived, and parameters of the two extreme points on the tradeoff curve are
characterized for the minimum storage and minimum bandwidth respectively. Moreover, two codes corresponding to the extreme points are explicitly constructed over the fields of size comparable to the code length and with the lowest sub-packetization.
Finally, for the convenience of practical use, systematic encoding processes for the two codes are also established.
\end{abstract}

\begin{IEEEkeywords}
Regenerating code, rack-aware storage, repair bandwidth.
\end{IEEEkeywords}

\section{Introduction}\label{sec0}
In distributed storage systems,  node failures occur frequently due to unexpected situations, such as server downtime, power outage, network disconnection, etc. In order to maintain data reliability, a self-sustaining system must repair node failures (i.e., recover the data stored in failed nodes) by downloading data from surviving nodes. A crucial metric for node repair efficiency is the repair bandwidth, i.e., the total amount of data transmitted during the repair process. Erasure codes attaining the optimal repair bandwidth have attracted a lot of study \cite{Balaji2018Survey} in the past decade.

The traditional work usually assumes a homogeneous network where the communication cost between all nodes are equal.
However, modern data centers often have hierarchical topologies that the nodes are organized in racks (or clusters) and the cross-rack (or inter-cluster) communication cost is much more expensive than the intra-rack (or intra-cluster) communication cost. Characterizing the optimal node repair in hierarchical data centers becomes a newly arisen branch in erasure codes which have been discussed in different rack-aware (or clustered) storage models \cite{Hu16,Prakash17,Prakash18,Sohn18,Sohn19,Hou19}. In this work, we focus on the rack-aware storage model adopted in \cite{Hu16,Hou19,Chen,Hou20,ZhouMSRR,ZhouarXivMBRR,Wangjj}.

Specifically, suppose $n=\bar{n}u$ and the $n$ nodes are organized in $\bar{n}$ racks each containing $u$ nodes. A data file consisting of $B$ symbols is stored across the $n$ nodes each storing $\alpha$ symbols such that any $k$ nodes can retrieve the data file.
When a node fails, the node repair process is to generate a replacement node that stores exactly the same data as the failed node does.
To rule out the trivial case, we assume throughout that $k\geq u$, so that any single node failure cannot be repaired locally within a rack. With respect to a repair process of a failed node, the rack containing the failed node is called  the {\it host rack}. The repair process is accomplished by the two kinds of communication below:
\begin{enumerate}
   \item \textbf{Intra-rack transmission.}
	The remaining $u-1$ nodes in the host rack transmit information to the replacement node.
	\item \textbf{Cross-rack transmission.}
	Outside the host rack, $\bar{d}$ helper racks each transmit $\beta$ symbols to the replacement node.
\end{enumerate}
Assume the nodes within each rack can communicate freely without taxing the system bandwidth, thus the repair bandwidth $\gamma$ only dependents on the cross-rack transmission, i.e., $\gamma=\bar{d}\beta$.

The above rack-aware storage model was introduced in \cite{Hu16} where the authors derived a lower bound on the repair bandwidth for codes with the minimum storage and also presented an existential construction of codes attaining this lower bound. Hou et al. \cite{Hou19} derived the cut-set bound for this model and further characterized a tradeoff curve between $\alpha$ and $\beta$. The codes with parameters lying on the tradeoff curve are called rack-aware regenerating codes (RRCs). Two extreme points on the tradeoff curve corresponding to the minimum storage rack-aware regenerating (MSRR) codes and minimum bandwidth rack-aware regenerating (MBRR) codes are of special interests. The authors of \cite{Hou19} presented existential constructions of MBRR and MSRR codes over sufficiently large fields. Explicit constructions of MSRR codes for all admissible parameters were derived in \cite{Chen}. Then, Hou et al. \cite{Hou20} and Zhou et al. \cite{ZhouMSRR} further reduced the sub-packetization (i.e., the number of symbols stored in each node) for MSRR codes. Recently, Zhou et al. \cite{ZhouarXivMBRR} developed explicit constructions of MBRR codes for all parameters.

Unfortunately, due to repair model assumed in RRCs, the repair is fragile in front of multiple node failures. Specifically, the repair of a failed node requires the participation of all the remaining $u-1$ nodes in the host rack as well as $\bar{d}$ helper racks that contain no failures at all.
As a result, when more than one node fails in a rack, the repair process designed for RRCs \cite{Hou19,Chen,Hou20,ZhouMSRR,ZhouarXivMBRR} can no longer work. Moreover, in all the existing RRCs \cite{Hou19,Chen,Hou20,ZhouMSRR,ZhouarXivMBRR}, it always assumes $\bar{d}\geq\bar{k}$ where $\bar{k}=\lfloor\frac{k}{u}\rfloor$. To reduce the storage overhead, $\bar{k}$ must be very near to $\bar{n}$. Thus the assumption $\bar{d}\geq\bar{k}$ usually leads to $\bar{d}=\bar{n}-1$ or $\bar{n}-2$. Therefore, in case of multiple node failures, it may fail to find $\bar{d}$ helper racks that contain no failures at all. In other words, the node repair process of existing RRCs mostly can tolerate only one node failure.

However, the scenarios of multiple node failures are quite common in distributed storage systems. For example, many systems (e.g., Total Recall \cite{Totalrecall}) apply a lazy repair strategy where the repair process is triggered only after the total number of failed nodes has reached a predefined threshold. Thus it is necessary to design a repair process that can tolerate multiple node failures. For this purpose, we relax the initial repair model of RRCs from the following two aspects.

\vspace{-10pt}
\subsection*{(a) Reducing the cross-rack connections by restricting to $\bar{d}<\bar{k}$.}
We note that $\bar{d}\!\geq\! \bar{k}$ is not an intrinsic condition for the rack-aware storage model. Because the repair of a failed node needs helper nodes both from inside the host rack and the entire racks outside, thus assuming $\bar{d}<\bar{k}$ will not lead to a contradiction that the data file can be retrieved from any less than $k$ nodes. Moreover, restricting to $\bar{d}<\bar{k}$ brings more benefits. Firstly, it provides more flexibility in the selection of helper racks. Secondly, it reduces the repair degree, i.e., the number of (cross-rack) connections needed for the repair, as the locally repairable codes (LRCs) do \cite{Gopalan2012}. Thirdly, it turns out that array codes with very low sub-packetization suffice to achieve the optimal repair bandwidth as $\bar{d}<\bar{k}$ (see Section \ref{sec2}-\ref{sec3}).

\vspace{-10pt}
\subsection*{(b) Reducing the intra-rack connections by introducing a threshold $l<u$.}
Instead of requiring all the remaining $u-1$ nodes in the host rack to participate in the repair, we introduce a threshold $l<u$ within the host rack such that $l$ surviving nodes in the host rack as well as $\bar{d}$ helper racks are sufficient to accomplish the node repair. That is, the initial setting of RRCs corresponds to $l=u-1$, while we extend to $0\leq l<u$.

\vspace{8pt}
After the relaxations, for all the failure patterns that node failures happen in at most $\bar{n}-\bar{d}$ racks and each rack contains no more than $u-l$ failures, our repair process is still implementable. For example, set $u\!=\!5, \bar{n}\!=\!30$ and $n\!-\!k\!=\!6$. Then $n\!=\!150$ and $k\!=\!144=5\times 28+4$, i.e., $\bar{k}=28$. Let $\bar{d}=8$ and $l=3$. If each rack contains at most two failures, our repair can tolerate $44$ node failures. Even in the worst case, our repair process still can tolerate $\min\{u-l,\bar{n}-\bar{d}\}$ node failures. Certainly, the improvement of erasure tolerance is at the sacrifice of storage. By the constructions given in this paper, we can build codes with storage overhead $\frac{n\alpha}{B}\approx 1.46$ (or $1.56$) for the example. However, the codes are very practical because they are built over $\mathbb{F}_{2^8}$ and have sub-packetization $\alpha=1$ (or $\alpha=8$). So the sacrifice of storage is acceptable and worthwhile.
For other failure patterns, we currently depend on the naive approach for repair (i.e., recover the data file from $k$ surviving nodes and then regenerate the failed nodes).

In this work, we focus on optimizing the repair bandwidth in the relaxed model. For simplicity, the code achieving the optimal repair bandwidth in this model is termed an $(\bar{n}u,k,\bar{d},l;\alpha,\beta,B)$ MET-RRC (multiple-erasure-tolerance RRC), and the MET-RRCs with the minimum storage and minimum repair bandwidth are called MET-MSRR and MET-MBRR respectively.

\subsection{Related Work}
Regenerating codes that can repair multiple node failures in rack-aware (or clustered) storage systems were also studied in \cite{Prakash17,Gupta,Wangjj}. In more detail, Abdrashitov et al. \cite{Prakash17} considered the repair of $t\geq 2$ nodes within a cluster. The main difference is that in \cite{Prakash17} (also in \cite{Prakash18}) the data file can be retrieved by any $\bar{k}$ clusters rather than any $k$ nodes.
As a result, their codes only have fault tolerance $\bar{n}-\bar{k}$, i.e., when more than $\bar{n}-\bar{k}$ nodes fail, the data file may be lost. By contrast, our codes have fault tolerance $n-k\approx (\bar{n}-\bar{k})u$.
Gupta et al. \cite{Gupta} considered the repair of multiple node failures that are evenly distributed in $f>1$ racks, so a cooperative repair approach was applied. However, they still assumed $\bar{d}\geq\bar{k}$ which causes high storage overhead because it must hold $\bar{d}\leq \bar{n}-f$. In this work we mainly consider the repair of multiple node failures in one rack. For failures distributed in different racks, we adopt a separate repair approach. The joint repair of  multiple node failures in different racks is one of our future research problems. Recently, Wang et al. \cite{Wangjj} constructed MSRR codes that can optimally repair $h\geq 1$ failed nodes within a rack from $\bar{d}$ helper racks including $\bar{e}$ corrupted helper racks. However, they still assumed $\bar{d}-2\bar{e}\geq \bar{k}$ which leads to high storage overhead and exponential sub-packetization.

Unlike our rack-aware storage model where the intra-rack communication cost is completely neglected, the models in \cite{Sohn18,Sohn19} assume a fixed ratio between the intra-rack and cross-rack communication cost and define the repair bandwidth as the sum of the two parts. Besides, they calculated the cross-rack bandwidth with respect to nodes rather than racks outside the host rack.

\subsection{Contributions}
We first derive a cut-set bound under the relaxed repair model which then induces a tradeoff between the storage per node (i.e., $\alpha$) and  repair bandwidth (i.e., $\bar{d}\beta$). The codes with parameters lying on the tradeoff curve are called MET-RRCs. Furthermore, we characterize the parameters at the two extreme points on the tradeoff curve that correspond to the MET-MSRR code and MET-MBRR code respectively.

More importantly, we build explicit MET-MSRR codes and MET-MBRR codes with the lowest sub-packetization (because $\beta=1$) for all parameters over a finite field $F$ satisfying $u\mid (|F|-1)$ and $|F|>n$. In particular, we discuss the relation between our MET-MSRR code and optimal LRCs. When $\bar{d}=0$, our construction of MET-MSRR codes turns out to be a reformulation of the construction of optimal $(r,\delta)$ LRCs in \cite{LRCFamily2014} from the perspective of parity-check matrices. As for multiple erasure tolerance, our codes permit simultaneously repairing $h\leq u-l$ node failures in one rack from $l$ local helper nodes and $\bar{d}$ helper racks with the optimal repair bandwidth. Additionally, for the convenience of practical use, we establish systematic encoding processes for the two codes.

The remaining of the paper is organized as follows. Section II derives the cut-set bound and the parameters for the two extreme points. Section III and IV present the explicit constructions of the MET-MSRR code and MET-MBRR code respectively. A systematic encoding process is provided right after each code construction. Section V concludes the paper.

\section{The cut-set bound}\label{sec1}

First we introduce some notations used throughout the paper. For nonnegative integers $i<j$, let $[j]=\{1,...,j\}$ and $[i,j]=\{i,i+1,...,j\}$. For simplicity, we label the $\bar{n}$ racks by $e\in[0,\bar{n}-1]$ and the $u$ nodes within each rack by $g\in[0,u-1]$. Thus each of the $n$ nodes is labeled by a pair $(e,g)\in[0,\bar{n}-1]\times[0,u-1]$.

Then we draw an information flow graph describing the node repair process and the file reconstruction in the rack-aware storage system. An illustration is given in Fig. \ref{fg11}.

\begin{itemize}
\item The data file consisting of $B$ symbols flows from the source vertex S to a data collector C connecting to arbitrary $k$ nodes. In order to show the storage size of each node, we split each node into two nodes  $X^{\rm in}_{\scriptscriptstyle{(e,g)}}$ and $X^{\rm out}_{\scriptscriptstyle{(e,g)}}$ with a directed edge of capacity $\alpha$.
\item When a node $X_{\scriptscriptstyle{(e,g)}}$ fails, the replacement node $X'_{\scriptscriptstyle{(e,g)}}$ connects to $l$ surviving nodes in rack $e$ with edges of capacity $\infty$ and $\bar{d}$ helper racks outside rack $e$ with edges of capacity $\beta$. After the repair, the whole rack is replaced by a replacement rack and the unrepaired nodes are copied to the replacement rack with edges of capacity $\infty$ from the original node to the copy.
\item Within each helper rack, there exist directed edges of capacity $\infty$ from the other nodes to the node connected by the replacement node, which means the $\beta$ symbols uploaded by the helper rack are computed based on the data of all nodes in the rack.

\end{itemize}

\begin{figure}[ht]
\begin{center}
\includegraphics[width=0.65\columnwidth]{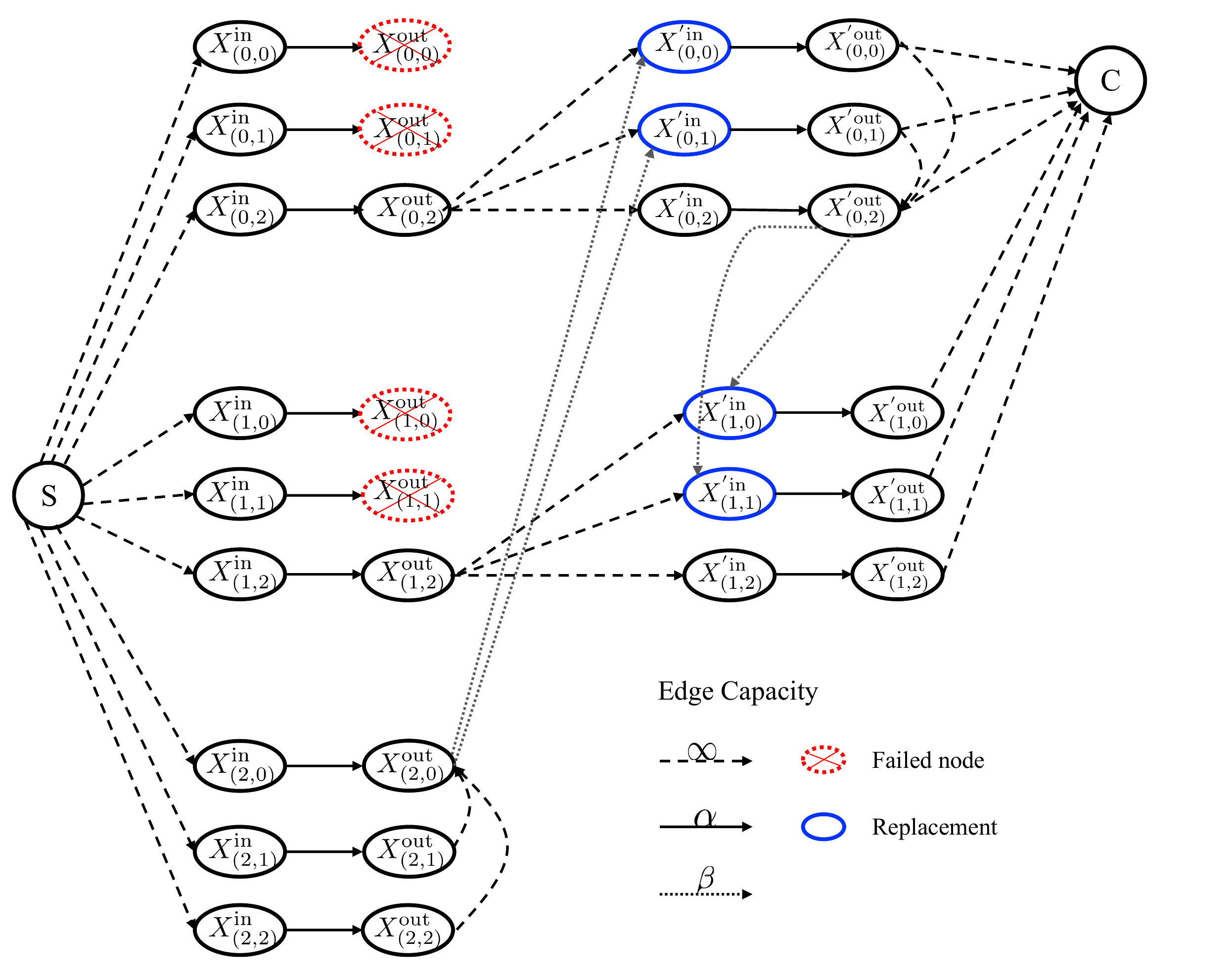}
\end{center}
\caption{\scriptsize An example of the information flow graph with $(n,k,u,\bar{d},l)=(9,6,3,1,1)$.}\label{fg11}
\end{figure}

\begin{theorem}\label{thm1}(The cut-set bound) For an $(\bar{n}u,k,\bar{d},l;\alpha,\beta,B)$ rack-aware storage system, let $B^*$ denote the maximum possible file size, then
\begin{equation}\label{eq1}
B\leq B^*\!=\!(\bar{k}l+\min\{u_0,l\})\alpha+(u\!-\!l)\sum_{i=1}^{\bar{d}}\min\{i\beta,\alpha\}\;,	
\end{equation}
where $u_0=k-\bar{k}u$.	
\end{theorem}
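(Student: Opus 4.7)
The plan is to derive the cut-set bound via the standard max-flow/min-cut argument on the information flow graph of Fig.~\ref{fg11}: it suffices to exhibit a cut $(A,\bar A)$ separating $S$ from $C$ whose capacity is exactly $B^*$, for then the reliability requirement forces $B\leq B^*$. Without loss of generality I take $C$ to read all $u$ nodes of racks $0,\ldots,\bar k-1$ plus $u_0$ specific nodes of the partial rack $\bar k$ (omitted when $u_0=0$), and I analyze the sequential failure/repair scenario where these $k$ target nodes fail and are repaired one by one, rack by rack: rack $0$ first, then rack $1$, and so on, ending with the partial rack; within each rack $e$ the $l$ (respectively $\min\{u_0,l\}$) ``storage-cut'' positions $(e,0),\ldots,(e,l-1)$ are repaired before the ``extra'' positions $(e,l),(e,l+1),\ldots$.

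The cut is defined by placing $S$ and every pre-failure vertex in $A$, the data collector together with its incoming edges in $\bar A$, every storage-cut replacement node with $X'^{\rm in}\in A$ and $X'^{\rm out}\in\bar A$ (so the $X'^{\rm in}\!\to\!X'^{\rm out}$ edge is severed for a cost of $\alpha$), and every extra replacement node with both $X'^{\rm in},X'^{\rm out}\in\bar A$ (so only its incoming repair edges may be cut). Summing the storage-cut contributions across all target racks immediately yields the first term $(\bar k l+\min\{u_0,l\})\alpha$ of $B^*$.

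The bulk of the argument is accounting for the incoming edges of the extras. For an extra in rack $e$, I route the $l$ intra-rack $\infty$-edges through the storage-cut positions of the same rack, whose $X^{\rm out}$'s lie in $\bar A$, so these edges cross $\bar A\!\to\!\bar A$ and are uncut. For the $\bar d$ cross-rack helper racks, the rack-by-rack ordering supplies $\min\{e,\bar d\}$ already-processed, fully-repaired target racks: choosing a storage-cut position of such a prior rack $e'$ as the helper-rack representative puts the outgoing $\beta$-edge from $\bar A$ to $\bar A$ (free, since the representative's storage cut is already paid in the first term), and the intra-helper-rack $\infty$-edges feeding the representative run backward $\bar A\!\to\!A$ and so remain uncut. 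The remaining $\max\{\bar d-e,0\}$ cross-rack helpers must be pre-failure racks whose representative sits in $A$, each contributing $\beta$. Taking the minimum with the alternative cost $\alpha$ of cutting the extra's own storage edge, the per-extra cost is $\min\{\max\{\bar d-e,0\}\beta,\alpha\}$. Summing $(u-l)$ extras per full rack and noting that $\bar d<\bar k$ makes every rack with $e\geq\bar d$ (in particular the partial rack) contribute zero, the reindexing $i=\bar d-e$ then produces the second term $(u-l)\sum_{i=1}^{\bar d}\min\{i\beta,\alpha\}$ of $B^*$.

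The main technical challenge I anticipate is in the previous paragraph: justifying that a storage-cut position of a prior rack serves as a genuinely zero-cost cross-rack representative. This rests on (i) the rack-by-rack processing order, which guarantees that every current-state vertex of a prior rack lies in $\bar A$ so that the intra-helper-rack $\infty$-edges feeding the representative can only go backward across the cut; (ii) the assumption $\bar d<\bar k$ of the relaxed model, which ensures enough prior racks exist to fill all $\bar d$ cross-rack slots for the partial rack's extras and for every full rack with $e\geq\bar d$; and (iii) careful bookkeeping of the replacement-rack copies generated by each repair, to confirm that the cascading $\infty$-edges from copies inherit the $A/\bar A$ assignment consistently and do not introduce further contributions to the cut.
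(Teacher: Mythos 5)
Your proof is correct and takes essentially the same approach as the paper's: you walk through the same sequential failure/repair scenario on the information flow graph and construct the same cut, with the central observation — that already-processed target racks serve as zero-cost cross-rack helpers because their storage cut is already paid — being precisely the content of the paper's Claim~2, and the $\bar d<\bar k$ assumption playing the same role of making the partial rack's extras free. The only difference in scope is that by exhibiting a single cut you establish the inequality $B\leq B^*$, whereas the paper further argues (Claims~1--3) that this collector and cut attain the minimum over all choices and then invokes the multicast max-flow/min-cut theorem to obtain the equality $B^*=(\bar k l+\min\{u_0,l\})\alpha+(u-l)\sum_{i=1}^{\bar d}\min\{i\beta,\alpha\}$ in~\eqref{eq1}.
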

\begin{proof}
By network coding for the multicast problem\cite{Yeung08}, $B^*\!=\!\min_{\scriptscriptstyle{\rm C}}$MinCut(S,C). Therefore, it is sufficient to figure out a data collector ${\rm C}^*$ that achieves the minimum MinCut(S,C) among all data collectors.

For simplicity, we divide all nodes into two classes: the replacement node (eg., $X'_{(i,j)}$ for $i,j\in\{0,1\}$ in Fig. \ref{fg11}) and the original node (eg., $X'_{(i,2)}$ for $i\in\{0,1\}$ in Fig. \ref{fg11}). It is clear that in order to cut an original node apart from  S, the cut value must be $\alpha$. However, to cut a replacement node apart from  S, there are two ways. One is to cut the edge between the 'in' node and 'out' node which has capacity $\alpha$, the other is to cut its connections with the helper nodes. Although the latter cut involves cutting off $l$ nodes from inside the rack and $\bar{d}$ racks from outside which seemingly leads to a cut value much larger than $\alpha$, the real cut value is much smaller when some of the helper nodes have already been cut off. To be specific, we have the following three claims which characterize the ${\rm C}^*$ by steps.

{\it Claim 1. Suppose ${\rm C}^*$ connects to $l'\leq l$ nodes from one rack. Then cutting off each of the $l'$ nodes contributes $\alpha$ to the mincut.}

Proof of Claim 1. For $1\leq i\leq l'$, if the $i$-th node connected by ${\rm C}^*$ in one rack is an original node, then it obviously contributes $\alpha$ to the mincut. Suppose the $i$-th node is a replacement node and cut it off from its helper nodes. The smallest possible cut is treating the former $i-1$ nodes as intra-rack helper nodes and then cutting off $l-i+1$ more intra-rack helper nodes which contributes at least $\alpha$ to the mincut. Since the cut between the 'in' node and 'out' node always contributes $\alpha$, the smallest cut for cutting off each node is $\alpha$.

Based on Claim 1 we conclude that if ${\rm C}^*$ connects to $l'\leq l$ nodes from one rack, these nodes contribute $l'\alpha$ to the mincut. Next we need to analyse the case that ${\rm C}^*$ connects to more than $l$ nodes from one rack.

{\it Claim 2. Suppose ${\rm C}^*$ connects to $i$ racks and $l'\geq l$ nodes from the $(i+1)$-th rack. Then for $1\leq j\leq l$, the $j$-th node from the $(i+1)$-th rack contributes $\alpha$ to the mincut. For $l<j\leq l'$ the $j$-th node from the $(i+1)$-th rack contributes $\min\{(\bar{d}-i)^{+}\beta,\alpha\}$ to the mincut, where $(\bar{d}-i)^{+}\triangleq\max\{\bar{d}-i,0\}$.}

Proof of Claim 2. For $1\leq j\leq l$, the claim follows from Claim 1. For $l<j\leq l'$, the $j$-th node from the $(i+1)$-th rack contributes less than $\alpha$ to the mincut only if it is a replacement node and cut off from its helper nodes. Because the first $i$ racks and the first $l$ nodes from the $(i+1)$-th rack can server as helper nodes, it only needs to cut off $\bar{d}-i$ additional helper racks for $i\leq \bar{d}$. For $i>\bar{d}$, all helper nodes have already been cut off, thus the $j$-th node contributes zero to the mincut. The claim is proved.

From Claim 2, we conclude that after connecting to $i$ racks, the rest $k-iu$ nodes concentrating on racks are no worse than separating on racks to achieve MinCut(S,${\rm C}^*$). In more detail, since $\min\{(\bar{d}-i)^{+}\beta,\alpha\}\leq \alpha$, whenever ${\rm C}^*$ has more than $u$ nodes to connect, assembling $u$ of these nodes in one rack will give a mincut no larger than other cases.
Therefore, we have the following claim.

{\it Claim 3. Suppose ${\rm C}^*$ has connected to $i$ racks. If $k-ui\geq u$, we can assume ${\rm C}^*$ connects to one more rack without changing MinCut(S,${\rm C}^*$).}

As a result, we can always assume ${\rm C}^*$ connects to $\bar{k}$ racks and $u_0$ nodes from the $(\bar{k}+1)$-th rack.

Next we compute MinCut(S,${\rm C}^*$). By Claim 1 and Claim 2 we know for $1\leq i\leq\bar{d}$, the $i$-th rack contributes $l\alpha+(u-l)\min\{(\bar{d}-i+1)\beta,\alpha\}$. For $\bar{d}<i\leq \bar{k}$, the $i$-th rack contributes $l\alpha$. So the $\bar{k}$ racks in all contribute
$$\bar{k}l\alpha+(u-l)\sum_{i=1}^{\bar{d}}\min\{i\beta,\alpha\}\;.$$
Then consider the final $u_0$ nodes in the $(\bar{k}+1)$-th rack. If $u_0\leq l$, by Claim 1 the $u_0$ nodes together contribute $u_0\alpha$ to the mincut. If $u_0>l$, by Claim 2 and the condition $\bar{d}<\bar{k}$, the $u_0$ nodes together contribute $l\alpha$. So the theorem is proved.
\end{proof}

\begin{remark}
Removing the restriction $\bar{d}<\bar{k}$, the proof of Theorem \ref{thm1} actually induces a general cut-set bound
\begin{equation}\label{eq0}
B\leq (\bar{k}l+\min\{u_0,l\})\alpha+(u\!-\!l)\sum_{i=1}^{\min\{\bar{d},\bar{k}\}}\min\{(\bar{d}-i+1)\beta,\alpha\}\;.	
\end{equation}
When $\bar{d}<\bar{k}$, it is exactly the bound (\ref{eq1}) discussed in this work. When $\bar{d}\geq\bar{k}$ and $l=u-1$, (\ref{eq0}) coincides with the cut-set bound derived in \cite{Hou19}. Furthermore, when $u=1$ the bound (\ref{eq0}) degenerates into the initial cut-set bound derived in \cite{Dimakis2011}.
\end{remark}

The code with parameters meeting (\ref{eq1}) with equality are termed MET-RRC. Denote $\tilde{u}_0=\min\{u_0,l\}$.
Since $\min\{i\beta,\alpha\}\leq \alpha$, from (\ref{eq1}) it immediately has $B\leq (\bar{k}l+\tilde{u}_0)\alpha+(u-l)\bar{d}\alpha$. Thus we derive a lower bound on $\alpha$, i.e.,
\begin{equation}\label{eq2}
 \alpha\geq B/(\bar{k}l+\tilde{u}_0+(u-l)\bar{d})\;.	
\end{equation}
On the other hand, since $\bar{d}$ helper racks each providing $\beta$ symbols along with $l$ helper nodes in the host rack each providing $\alpha$ symbols can recover the failed node (plus the $l$ helper nodes within the host rack), it follows $\bar{d}\beta+l
\alpha\geq (l+1)\alpha$, namely, $\alpha\leq\bar{d}\beta$.
Combining with (\ref{eq1}) and the fact $\min\{i\beta,\alpha\}\leq i\beta$, it has
$B\leq (\bar{k}l+\tilde{u}_0)\bar{d}\beta+(u-l)\beta\sum_{i=1}^{\bar{d}}i$ which induces a lower bound on $\beta$, i.e.,
\begin{equation}\label{eq3}
\beta\geq B\big/\bar{d}(\bar{k}l+\tilde{u}_0+(u-l)(\bar{d}+1)/{2})\;.
\end{equation}
However, given (\ref{eq1}) holding with equality, (\ref{eq2}) and (\ref{eq3}) cannot hold with equality simultaneously, because the equality in (\ref{eq2}) holds only if $\alpha\leq\beta$ while equality in (\ref{eq3}) holds only if $\alpha\geq\bar{d}\beta$. Therefore, there exists a tradeoff between $\alpha$ and $\beta$ for MET-RRCs. If we optimize $\alpha$ first with respect to (\ref{eq2}) and then $\beta$ with respect to (\ref{eq1}), then we get the extreme point on the tradeoff curve corresponding to the MET-MSRR codes. Reversing the order of optimization, one can get another extreme point corresponding to the MET-MBRR codes.

\begin{corollary}\label{coro2}
The MET-MSRR and MET-MBRR codes have the following parameters: for $0<\bar{d}<\bar{k}$
\begin{equation}\label{msrr}
\alpha_{\scriptscriptstyle{\textsf{MET-MSRR}}}=\beta_{\scriptscriptstyle{\textsf{MET-MSRR}}}=B/(\bar{k}l+\tilde{u}_0+(u-l)\bar{d})\;,
\end{equation}
\begin{equation}\label{mbrr}
\alpha_{\scriptscriptstyle{\textsf{MET-MBRR}}}=\bar{d}\beta_{\scriptscriptstyle{\textsf{MET-MBRR}}}=B/(\bar{k}l+\tilde{u}_0+\frac{(u-l)(\bar{d}+1)}{2})\;.
\end{equation}
Particularly when $\bar{d}=0$, it has $\beta=0$ and we only consider MET-MSRR codes which have $\alpha_{\scriptscriptstyle{\textsf{MET-MSRR}}}=B/(\bar{k}l+\tilde{u}_0)$.
\end{corollary}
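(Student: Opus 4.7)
The plan is to reduce the problem to a two-stage sequential optimization of the cut-set bound (\ref{eq1}), carried out in opposite orders for the two extreme points. The key observation is that the intermediate lower bounds (\ref{eq2}) and (\ref{eq3}) were each obtained by loosening the piecewise term $\min\{i\beta,\alpha\}$ in opposite directions, so their tightness conditions pin down the relation between $\alpha$ and $\beta$ at each endpoint.

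For the MET-MSRR point, I would first minimize $\alpha$. Since (\ref{eq2}) is derived by replacing every $\min\{i\beta,\alpha\}$ with $\alpha$, its equality requires $i\beta\geq\alpha$ for every $1\leq i\leq\bar{d}$, which (using $i\geq 1$) reduces to $\beta\geq\alpha$. Setting $\alpha$ equal to the right-hand side of (\ref{eq2}) then gives the claimed $\alpha_{\textsf{MET-MSRR}}$. At this fixed $\alpha$, the cut-set bound (\ref{eq1}) becomes $B=(\bar{k}l+\tilde{u}_0)\alpha+(u-l)\sum_{i=1}^{\bar{d}}\min\{i\beta,\alpha\}$, and minimizing the bandwidth $\bar{d}\beta$ subject to $\beta\geq\alpha$ forces $\beta=\alpha$, completing (\ref{msrr}).

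For the MET-MBRR point, I would reverse the order and first minimize $\beta$. Because (\ref{eq3}) is derived by replacing every $\min\{i\beta,\alpha\}$ with $i\beta$, its equality requires $i\beta\leq\alpha$ for all $1\leq i\leq\bar{d}$, i.e.\ $\alpha\geq\bar{d}\beta$. Setting $\beta$ equal to the right-hand side of (\ref{eq3}) yields the claimed $\beta_{\textsf{MET-MBRR}}$, and then substituting $\min\{i\beta,\alpha\}=i\beta$ into (\ref{eq1}) and minimizing $\alpha$ subject to $\alpha\geq\bar{d}\beta$ forces $\alpha=\bar{d}\beta$, which is exactly the first equality in (\ref{mbrr}). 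The edge case $\bar{d}=0$ is immediate: the sum in (\ref{eq1}) is empty, the cross-rack transmission vanishes so $\beta=0$ trivially, and only the storage-optimal extreme survives, giving $\alpha_{\textsf{MET-MSRR}}=B/(\bar{k}l+\tilde{u}_0)$.

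The only subtle point to verify is that this sequential (lexicographic) optimization genuinely picks out the two Pareto-optimal corners of the tradeoff region cut out by (\ref{eq1}). This is where a little care is needed, but it follows from the structural observation that, as a function of $(\alpha,\beta)$ with $B$ fixed, the binding face of (\ref{eq1}) is a monotone staircase: increasing $\beta$ only slackens (\ref{eq1}) through the terms where $i\beta<\alpha$, so the feasible region is componentwise monotone and its two lexicographic minima coincide with its extreme points. I do not expect any serious obstacle beyond this check; the bulk of the argument is simply recording which entries of $\min\{i\beta,\alpha\}$ are active at each endpoint and then solving a linear equation in one variable.
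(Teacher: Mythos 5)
Your proof is correct and follows the same sequential-optimization strategy the paper sketches in the discussion immediately preceding the corollary: loosen the $\min\{i\beta,\alpha\}$ terms in one direction to get a lower bound on $\alpha$, read off the tightness condition $\beta\geq\alpha$, and then set $\beta$ at its floor; reverse the order for MET-MBRR. The only stylistic difference is at the MET-MBRR step: you recover $\alpha=\bar{d}\beta$ by minimizing $\alpha$ subject to the tightness condition $\alpha\geq\bar{d}\beta$, whereas the paper pairs that tightness condition with the system constraint $\alpha\leq\bar{d}\beta$ (derived from $\bar{d}\beta+l\alpha\geq(l+1)\alpha$) to force equality directly — and that same constraint is what guarantees (\ref{eq3}) is a genuine lower bound on $\beta$ in the first place, so it is worth stating explicitly rather than only implicitly through your citation of (\ref{eq3}).
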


\begin{remark}
Note $\bar{k}l+\tilde{u}_0+(u-l)\bar{d}=\bar{k}u+\tilde{u}_0-(u-l)(\bar{k}-\bar{d})<k$. Thus $\alpha_{\scriptscriptstyle{\textsf{MET-MSRR}}}>B/k$, which means our MET-MSRR code is not an $[n,k]$ MDS codes. Actually, because we restrict to $\bar{d}<\bar{k}$, part of the $k$ nodes can be recovered from the remains, so redundancy is already introduced among $k$ nodes. As stated before, by restricting to $\bar{d}<\bar{k}$ we sacrifice a little storage for lower repair degree and more flexibility in the repair.
\end{remark}

\section{Explicit construction of MET-MSRR codes}\label{sec2}
In this section we give an explicit construction of the MET-MSRR code for all $0\leq \bar{d}<\bar{k}$. Particularly, our code is scalar, i.e., $\alpha=\beta=1$ for $0<\bar{d}<\bar{k}$ and $\alpha=1$ for $\bar{d}=0$. Then by Corollary \ref{coro2}, it has
$$B=\bar{k}l+\tilde{u}_0+(u-l)\bar{d}=\bar{k}u+\tilde{u}_0-(u-l)(\bar{k}-\bar{d})
\;.$$
Our MET-MSRR code $\mathcal{C}_{\scriptscriptstyle{\textsf{MET-MSRR}}}$ is constructed as a $B$-dimensional subcode of an $[n,\bar{k}u+\tilde{u}_0]$ generalized Reed-Solomon (GRS) code. The details are given below.

\begin{construction}
Let $F$ be a finite field satisfying $|F|>n$ and $u\mid (|F|-1)$.
\begin{enumerate}
  \item Construct an $[n,\bar{k}u+\tilde{u}_0]$ GRS code over $F$.
  \begin{itemize}
    \item Choose $n$ distinct elements in $F$ as the locators. Let $\xi$ be a primitive element of $F$ and $\eta=\xi^{\frac{|F|-1}{u}}$. Then for $e\in[0,\bar{n}-1]$ and $g\in[0,u-1]$, define
\begin{equation}\label{eq6}
\lambda_{(e,g)}=\xi^e\eta^g\;.
\end{equation}
It can be seen $\lambda_{(e,g)}\!\neq\!\lambda_{(e',g')}$ for $(e,g)\!\neq\! (e',g')$, because $(\xi^{e-e'})^u\!\neq\! 1$ for $e\!\neq\! e'\in[0,\bar{n}-1]$ while $(\eta^{g'\!-g})^u\!=\!1$ for all $g,g'\in[0,u-1]$.
\item The GRS code is defined by  a parity-check matrix $H\in F^{(n-\bar{k}u-\tilde{u}_0)\times n}$ with the $(t, (e,g))$-entry being
\begin{equation*}
H(t,(e,g))=\lambda^t_{(e,g)},~~
t\!\in\![0,n\!-\!\bar{k}u\!-\!\tilde{u}_0\!-\!1], \; (e,g)\!\in\![0,\bar{n}\!-\!1]\!\times\![0,u\!-\!1]\;.
\end{equation*}
  \end{itemize}
\item Add more rows to $H$ to define a $B$-dimensional subcode which is the $\mathcal{C}_{\scriptscriptstyle{\textsf{MET-MSRR}}}$.
\begin{itemize}
  \item  For $i\in[0,u-l-1]$, define $T_i=\{i+ju\mid j\!\in\![\bar{n}\!-\!\bar{k},\bar{n}\!-\!\bar{d}\!-\!1]\}$. It is easy to see that $T_i\cap [0,n-\bar{k}u-\tilde{u}_0-1]=\emptyset$ and $T_i\cap T_j=\emptyset$ for $i\neq j\in[0,u-l-1]$.
  \item Denote
\begin{equation}\label{eqT}
  T=[0,n-\bar{k}u-\tilde{u}_0-1]\cup\big(\cup_{i=0}^{u-l-1}T_i\big)\;.
\end{equation}
Obviously, $|T|=n\!-\!\bar{k}u\!-\!\tilde{u}_0\!+\!(u\!-\!l)(\bar{k}\!-\!\bar{d})=n\!-\!B$ and $T\subset [0,n-1]$.
\item Extend the row index of $H$ from $[0,n\!-\!\bar{k}u\!-\!\tilde{u}_0\!-\!1]$ to $T$ and obtain a $(n-B)\times n$ matrix of full-rank which is the parity-check matrix of  $\mathcal{C}_{\scriptscriptstyle{\textsf{MET-MSRR}}}$, i.e.,
    \begin{equation}\label{PCE}
\mathcal{C}_{\scriptscriptstyle{\textsf{MET-MSRR}}}=\{{\bm c}\in F^n\mid \sum_{e=0}^{\bar{n}-1}\sum_{g=0}^{u-1}\lambda^t_{(e,g)}c_{(e,g)}=0,~ \forall t\!\in\! T\}\;.
\end{equation}
\end{itemize}
\end{enumerate}
\end{construction}

Since $\mathcal{C}_{\scriptscriptstyle{\textsf{MET-MSRR}}}$ is a subcode of an $[n,\bar{k}u+\tilde{u}_0]$ GRS code, any $k\geq \bar{k}u+\tilde{u}_0$ node obviously can recover the original file. Before proving the node repair property of $\mathcal{C}_{\scriptscriptstyle{\textsf{MET-MSRR}}}$, we give an example to illustrate our construction.

\begin{example}\label{eg1}
Suppose $n\!=\!30, u\!=\!5, k\!=\!24, l\!=\!3, \bar{d}\!=\!2$, then $\bar{n}\!=\!6,\bar{k}\!=\!4,\tilde{u}_0\!=\!3$. The $\mathcal{C}_{\scriptscriptstyle{\textsf{MET-MSRR}}}$ has $\alpha\!=\!\beta\!=\!1$,
$B=\bar{k}l+\tilde{u}_0+(u-l)\bar{d}=19$. The code is defined by a parity-check matrix $H=(\lambda_{(e,g)}^t)_{t,(e,g)}$ which is a $11\times 30$ matrix and the $(t,(e,g))$-th entry is $\lambda_{(e,g)}^t$. Specifically, the column index $(e,g)\in[0,5]\times[0,4]$ corresponds to the storage nodes, while the row index is
\begin{equation*}
  t\in T= \{0,1,2,3,4,5,6\}\cup\{10,11\}\cup\{15,16\}\;.
\end{equation*}
Obviously, all rows of $H$ are drawn from the $30\times 30$ Vandermonde matrix $(\lambda_{(e,g)}^t)_{t\in[0,29],(e,g)}$, so $H$ is of full-rank and defines a linear code of dimension $B=19$.
Moreover, $H$ can be viewed as adding $4$ rows at the bottom of a $7\times 30$ Vandermonde matrix $(\lambda_{(e,g)}^t)_{t\in[0,6],(e,g)}$, so $\mathcal{C}_{\scriptscriptstyle{\textsf{MET-MSRR}}}$ is a subcode of a $[30,23]$ GRS code. Thus $k=24$ nodes suffice to retrieve the data file.
\end{example}

\subsection{Repair of multiple nodes in one rack}
 According to the rack-aware storage model, any single node failure can be repaired from $\bar{d}$ helper racks and $l$ survival nodes in the host rack. Therefore, up to $u-l$ node failures in one rack are repairable as long as $\bar{d}$ helper racks are available outside the rack. In the section, we show that in $\mathcal{C}_{\scriptscriptstyle{\textsf{MET-MSRR}}}$ any $h\leq u-l$ node failures in one rack can be repaired in a centralized manner from $\bar{d}$ helper racks each transferring $h\beta=h$ symbols as well as $l$ survival nodes within the host rack.

\begin{definition}[Rack-level Code]\label{defRLCode}
 For each codeword $\bm{c}=(c_{(e,g)})_{e\in[0,\bar{n}-1],g\in[0,u-1]}$ in $\mathcal{C}_{\scriptscriptstyle{\textsf{MET-MSRR}}}$ and for $i\in[0,u-l-1]$, define ${\bm w}({\bm c})^{(i)}=(w^{(i)}_0,w^{(i)}_1,...,w^{(i)}_{\bar{n}-1})\in F^{\bar{n}}$ where $w^{(i)}_e=\sum_{g=0}^{u-1}\lambda_{(e,g)}^ic_{(e,g)}$ for $e\in[0,\bar{n}-1]$. Then $\mathcal{W}^{(i)}=\{{\bm w}({\bm c})^{(i)}\mid {\bm c}\in\mathcal{C}_{\scriptscriptstyle{\textsf{MET-MSRR}}}\}$ is called a rack-level code from $\mathcal{C}_{\scriptscriptstyle{\textsf{MET-MSRR}}}$.
\end{definition}

In short, the rack-level code is formed by combining every $u$ coordinates of a rack into one coordinate. In the following, one can see the rack-level code plays as a bridge in the repair of node failures.

\begin{proposition}\label{prop4}
For $i\!\in\![0,u\!-\!l\!-\!1]$, all codewords in the rack-level code $\mathcal{W}^{(i)}$ from $\mathcal{C}_{\scriptscriptstyle{\textsf{MET-MSRR}}}$ fall in the same $[\bar{n},\bar{d}]$ MDS code.
\end{proposition}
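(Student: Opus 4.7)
The plan is to translate the parity-check relations defining $\mathcal{C}_{\scriptscriptstyle{\textsf{MET-MSRR}}}$ into parity-check relations on the rack-level coordinates $w_e^{(i)}$, and then recognize the resulting system as the parity-check matrix of an $[\bar{n},\bar{d}]$ GRS code.

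First I would exploit the multiplicative structure of the locators. Since $\lambda_{(e,g)}=\xi^e\eta^g$ and $\eta^u=1$, for any $t\in T$ write $t=qu+r$ with $r\in[0,u-1]$; then $\lambda_{(e,g)}^t=(\xi^u)^{eq}\xi^{er}\eta^{gr}$. Setting $\zeta=\xi^u$ and recognising $w_e^{(r)}=\xi^{er}\sum_g\eta^{gr}c_{(e,g)}$, each parity check $\sum_{e,g}\lambda_{(e,g)}^tc_{(e,g)}=0$ collapses to $\sum_{e=0}^{\bar{n}-1}\zeta^{eq}w_e^{(r)}=0$. Thus, for fixed $i\in[0,u-l-1]$, the relations that involve only $w^{(i)}$ are exactly those indexed by $q$ in the set $Q_i=\{q\ge 0:qu+i\in T\}$.

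Second, I would determine $Q_i$ explicitly. The subset $T_i\subset T$ directly contributes $q\in[\bar{n}-\bar{k},\bar{n}-\bar{d}-1]$. The remaining $t\in[0,(\bar{n}-\bar{k})u-\tilde{u}_0-1]$ with $t\equiv i\pmod u$ correspond to $q$ satisfying $qu\le(\bar{n}-\bar{k})u-(\tilde{u}_0+1+i)$. Using the restrictions $i\le u-l-1$ and $\tilde{u}_0\le l$, one has $i+\tilde{u}_0\le u-1$, so $\tilde{u}_0+1+i\le u$; this makes the integer bound collapse to $q\in[0,\bar{n}-\bar{k}-1]$, with the endpoint $q=\bar{n}-\bar{k}-1$ attained precisely because of that inequality. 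Merging the two contributions gives $Q_i=[0,\bar{n}-\bar{d}-1]$, and crucially this set does not depend on $i$.

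Third, I would read off the MDS conclusion. Every $w\in\mathcal{W}^{(i)}$ lies in the kernel of the $(\bar{n}-\bar{d})\times\bar{n}$ Vandermonde-type matrix $(\zeta^{eq})_{q\in[0,\bar{n}-\bar{d}-1],\,e\in[0,\bar{n}-1]}$. Because $u\mid|F|-1$, the element $\zeta=\xi^u$ has multiplicative order $(|F|-1)/u\ge\bar{n}$ (using $|F|>n=\bar{n}u$), so $\zeta^0,\zeta^1,\dots,\zeta^{\bar{n}-1}$ are pairwise distinct. Hence the matrix is the parity-check matrix of a genuine $[\bar{n},\bar{d}]$ GRS code, which is MDS; and since $Q_i$ is independent of $i$, the very same MDS code contains $\mathcal{W}^{(i)}$ for every $i\in[0,u-l-1]$, yielding the stated ``same'' qualifier.

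The main obstacle I anticipate is the careful bookkeeping in the second step, namely proving that the interval $[0,\bar{n}-\bar{k}-1]$ is reached \emph{exactly}, with no gap and no overshoot, so that together with the $T_i$-contribution the full interval $[0,\bar{n}-\bar{d}-1]$ emerges. This rests on the tight inequality $i+\tilde{u}_0\le u-1$, which is precisely where both the range $i\le u-l-1$ and the definition $\tilde{u}_0=\min\{u_0,l\}$ enter; getting this count right is what converts the $B$-dimensional subcode construction into a rack-level GRS structure of the promised dimension $\bar{d}$.
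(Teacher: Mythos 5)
Your argument is correct and follows essentially the same route as the paper: both rewrite the parity checks $\sum_{e,g}\lambda_{(e,g)}^{qu+i}c_{(e,g)}=0$ as $\sum_e(\xi^u)^{eq}w_e^{(i)}=0$ using $\eta^u=1$, verify via $i+\tilde{u}_0\le u-1$ that the exponents $q$ range over $[0,\bar{n}-\bar{d}-1]$ independently of $i$, and conclude that each $\mathcal{W}^{(i)}$ sits inside the kernel of the same $(\bar{n}-\bar{d})\times\bar{n}$ Vandermonde matrix over the distinct points $1,\xi^u,\dots,\xi^{(\bar{n}-1)u}$. The only difference is presentational: the paper fixes the target index set up front and checks it is contained in $T$, whereas you first compute $Q_i=\{q:qu+i\in T\}$ and then observe it contains (indeed equals) $[0,\bar{n}-\bar{d}-1]$; both amount to the same bookkeeping.
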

\begin{proof}
We derive a system of parity-check equations (PCEs) for $\mathcal{W}^{(i)}$ by
restricting the PCEs in (\ref{PCE}) to those with $t\in\{i+ju\mid j\in[0,\bar{n}-\bar{d}-1]\}$. Specifically, since $n-\bar{k}u-\tilde{u}_0-1=(\bar{n}-\bar{k})u-(\tilde{u}_0+1)$ and $\tilde{u}_0\in[0,l]$, it has $$\{i+ju\mid j\in[0,\bar{n}-\bar{k}-1]\}\subset [0,n-\bar{k}u-\tilde{u}_0-1]\;.$$
Combining with $T_i\subset T$, it is reasonable to restrict the PCEs to those with $t\in\{i+ju\mid j\in[0,\bar{n}-\bar{d}-1]\}$ and one can get
\begin{equation}\label{R1}
\sum_{e=0}^{\bar{n}-1}\sum_{g=0}^{u-1}\lambda_{(e,g)}^{i+ju} c_{(e,g)}=0,~j\!\in\![0,\bar{n}\!-\bar{d}\!-\!1]\;.
\end{equation}
By Definition \ref{defRLCode} and $\lambda_{(e,g)}^{ju}=(\xi^e\eta^g)^{ju}=\xi^{euj}$, (\ref{R1}) becomes \begin{equation}\label{R2}\sum_{e=0}^{\bar{n}-1}(\xi^{eu})^jw^{(i)}_e=0,~~
j\!\in\![0,\bar{n}\!-\bar{d}\!-\!1]\;.\end{equation}
Since $\xi$ is a primitive element of $F$ and $|F|>n$, $1,\xi^{u},\xi^{2u},...,\xi^{(\bar{n}-1)u}$ are distinct elements in $F$. Therefore, the PCEs in (\ref{R2}) define the same $[\bar{n},\bar{d}]$ MDS code for all $i\in[0,u-l-1]$ and the proposition follows.
\end{proof}

\begin{theorem}\label{thm5}
For any $e^*\in[0,\bar{n}-1]$, let $e^*$ be the host rack that contains $u-l$ failed nodes $\mathcal{F}=\{(e^*,g_1),...,(e^*,g_{u-l})\}$. Then
for each codeword ${\bm c}\in\mathcal{C}_{\scriptscriptstyle{\textsf{MET-MSRR}}}$, and for any $\bar{d}$ helper racks $h_1,...,h_{\bar{d}}\in [0,\bar{n}-1]\setminus\{e^*\}$, there exist vectors ${\bm s}_i\in F^{u-l}$ computed from  ${\bm c}$ restricted to the coordinates labelled by $(h_i,g)$  for all $g\in[0,u-1]$ and $i\in[\bar{d}]$, such that ${\bm c}$ restricted to the coordinates in $\mathcal{F}$, i.e.,  ${\bm c}_{\mathcal{F}}$, can be computed from ${\bm s}_1,...,{\bm s}_{\bar{d}}$ and the remaining $l$ coordinates of ${\bm c}$ in rack $e^*$.
\end{theorem}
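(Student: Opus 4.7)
The plan is to use the rack-level codes $\mathcal{W}^{(0)},\ldots,\mathcal{W}^{(u-l-1)}$ of Proposition \ref{prop4} as a relay. Specifically, I would set
\begin{equation*}
{\bm s}_i \;=\; \bigl(w_{h_i}^{(0)}, w_{h_i}^{(1)}, \ldots, w_{h_i}^{(u-l-1)}\bigr) \;\in\; F^{u-l},
\end{equation*}
where $w_{h_i}^{(j)} = \sum_{g=0}^{u-1}\lambda_{(h_i,g)}^{j}c_{(h_i,g)}$. Each component is an $F$-linear combination of the $u$ coordinates of ${\bm c}$ in rack $h_i$, so ${\bm s}_i$ depends only on ${\bm c}$ restricted to rack $h_i$, exactly as the theorem requires.

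Next, for each fixed $i \in [0, u-l-1]$, the replacement node, after collecting ${\bm s}_1,\ldots,{\bm s}_{\bar d}$, knows the $\bar d$ coordinates $w_{h_1}^{(i)},\ldots,w_{h_{\bar d}}^{(i)}$ of the rack-level codeword ${\bm w}({\bm c})^{(i)} \in \mathcal{W}^{(i)}$. By Proposition \ref{prop4}, $\mathcal{W}^{(i)}$ sits inside an $[\bar n,\bar d]$ MDS code, so any $\bar d$ coordinates determine the full codeword; in particular the missing value $w_{e^*}^{(i)}$ can be recovered. Doing this for every $i \in [0,u-l-1]$ yields $u-l$ values of the host-rack rack-level symbols.

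Finally, since $w_{e^*}^{(i)} = \sum_{g=0}^{u-1}\lambda_{(e^*,g)}^{i}c_{(e^*,g)}$, I would subtract the contribution of the $l$ surviving nodes in rack $e^*$ (which are locally available) to get, for $i \in [0,u-l-1]$,
\begin{equation*}
\sum_{j=1}^{u-l}\lambda_{(e^*,g_j)}^{i}\,c_{(e^*,g_j)} \;=\; w_{e^*}^{(i)} \;-\!\! \sum_{g\notin\{g_1,\ldots,g_{u-l}\}}\!\!\lambda_{(e^*,g)}^{i}\,c_{(e^*,g)}.
\end{equation*}
This is a $(u-l)\times(u-l)$ Vandermonde system in the unknowns ${\bm c}_{\mathcal F}$ whose coefficient matrix $\bigl(\lambda_{(e^*,g_j)}^{i}\bigr)_{i,j}$ is invertible because the locators $\lambda_{(e^*,g_1)},\ldots,\lambda_{(e^*,g_{u-l})}$ are pairwise distinct by (\ref{eq6}). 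Thus ${\bm c}_{\mathcal F}$ is uniquely determined from the ${\bm s}_i$'s and the $l$ surviving coordinates in rack $e^*$.

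I do not anticipate a hard obstacle, since the construction was engineered so that exactly the right parity-check rows (indices $t = i+ju$ with $j\in[0,\bar n-\bar d-1]$, all lying in $T$) are available to invoke Proposition \ref{prop4}. The only care points are verifying that the rack-level MDS step really applies to the restricted system of parity equations and that the within-rack locators are distinct, both of which are already in hand. The total cross-rack download is $\bar d(u-l)=\bar d\cdot h\beta$ symbols, matching the optimum $\beta=1$ per failure per helper rack.
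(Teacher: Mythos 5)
Your proof is correct and follows essentially the same route as the paper's: define ${\bm s}_i$ as the vector of rack-level symbols $(w_{h_i}^{(0)},\ldots,w_{h_i}^{(u-l-1)})$, recover each $w_{e^*}^{(i)}$ via the $[\bar n,\bar d]$ MDS property from Proposition \ref{prop4}, and then invert the $(u-l)\times(u-l)$ Vandermonde-type system obtained from equation (\ref{eq10}) after subtracting the contributions of the $l$ surviving local nodes. You spell out the local Vandermonde-inversion step more explicitly than the paper does, but the argument, decomposition, and key ingredients are identical.
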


\begin{proof}
By Definition \ref{defRLCode}, one can see that for $e\in[0,\bar{n}-1]$,
\begin{equation}\label{eq10}
  \begin{pmatrix}
    1&1&\cdots&1\\\lambda_{(e,0)}&\lambda_{(e,1)}&\cdots&\lambda_{(e,u-1)}\\
    \vdots&\vdots&\vdots&\vdots\\\lambda_{(e,0)}^{u-l-1}&\lambda_{(e,1)}^{u-l-1}&\cdots&\lambda_{(e,u-1)}^{u-l-1}
  \end{pmatrix}
  \begin{pmatrix}
    c_{(e,0)}\\c_{(e,1)}\\\vdots\\c_{(e,u-1)}
  \end{pmatrix}=\begin{pmatrix}
    w_e^{(0)}\\w_e^{(1)}\\\vdots\\w_e^{(u-l-1)}
  \end{pmatrix}\;.
\end{equation}
It is clear any $u-l$ columns of the matrix $(\lambda_{(e,g)}^i)_{i,g}$ on the left of (\ref{eq10}) form an invertible sub-matrix. Particularly when $e=e^*$, it implies that ${\bm c}_{\mathcal{F}}$ can be computed from the remaining $l$ coordinates in rack $e^*$ along with the $w_{e^*}^{(i)}$ 's for $i\in[0,u-l-1]$. Moreover, by Proposition \ref{prop4} we have that for $i\in[0,u-l-1]$, $w_{e^*}^{(i)}$ can be determined by $w_{h_1}^{(i)},...,w_{h_{\bar{d}}}^{(i)}$. That is, the vector ${\bm s}_j$ required in the theorem is defined as ${\bm s}_j=(w_{h_j}^{(0)},w_{h_j}^{(1)},...,w_{h_j}^{(u-l-1)})$ for $j\in[\bar{d}]$. Again by (\ref{eq10}) ${\bm s}_j$ clearly can be computed from the coordinates in rack $h_j$.
\end{proof}

Theorem \ref{thm5} states that any $u-l$ node failures in one rack can be repaired from $\bar{d}$ helper racks each transferring $(u-l)\beta$ symbols. Furthermore, in the next corollary we prove that any $h~(\leq u-l)$ node failures in one rack can be repaired from $\bar{d}$ helper racks each transferring $h\beta$ symbols. Moreover, within the host rack, only $l$ surviving nodes are needed for the repair. This fact brings more flexibility in determining when to start a repair process.

\begin{corollary}\label{coro6}
For any $h\in[u-l]$, replacing $u-l$ with $h$ in Theorem \ref{thm5}, the statement still holds.
\end{corollary}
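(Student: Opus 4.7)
The plan is to repeat the proof of Theorem~\ref{thm5} almost verbatim, but with everything truncated to the first $h$ rack-level codes rather than all $u-l$ of them. I would label the $h$ failed positions as $\mathcal{F}=\{(e^*,g_1),\dots,(e^*,g_h)\}$ and, for each helper rack $h_j$ with $j\in[\bar{d}]$, define
\begin{equation*}
{\bm s}_j=\bigl(w_{h_j}^{(0)},\,w_{h_j}^{(1)},\dots,w_{h_j}^{(h-1)}\bigr)\in F^{h},
\end{equation*}
which is simply the top-$h$ block of the right-hand side of (\ref{eq10}) specialized to $e=h_j$. Hence ${\bm s}_j$ is a linear function of ${\bm c}$ restricted to the coordinates in rack $h_j$, and each helper rack can compute it internally using only free intra-rack communication.

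Next I would invoke Proposition~\ref{prop4}: since $h\leq u-l$, the index range $[0,h-1]$ is contained in $[0,u-l-1]$, so every $\mathcal W^{(i)}$ with $i\in[0,h-1]$ lies in a common $[\bar n,\bar d]$ MDS code. Therefore, from the $\bar d$ scalars $w_{h_1}^{(i)},\dots,w_{h_{\bar d}}^{(i)}$ one recovers $w_{e^*}^{(i)}$, and iterating this over $i\in[0,h-1]$ reconstructs the entire vector $\bigl(w_{e^*}^{(0)},\dots,w_{e^*}^{(h-1)}\bigr)$ out of ${\bm s}_1,\dots,{\bm s}_{\bar d}$.

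Finally, I would restrict the identity (\ref{eq10}) to $e=e^*$ and to its top $h$ rows, giving
\begin{equation*}
\bigl(\lambda_{(e^*,g)}^{i}\bigr)_{i\in[0,h-1],\,g\in[0,u-1]}\,{\bm c}_{e^*}=\bigl(w_{e^*}^{(0)},\,w_{e^*}^{(1)},\dots,w_{e^*}^{(h-1)}\bigr)^{\!\top}.
\end{equation*}
The $h\times h$ submatrix whose columns are indexed by $\mathcal{F}$ is Vandermonde in the pairwise distinct scalars $\{\lambda_{(e^*,g)}:g\in\mathcal{F}\}$, hence nonsingular. Moving the contributions of the surviving host-rack coordinates (available for free via intra-rack communication, and at least $l$ of them are present since $h\leq u-l$) to the right-hand side and inverting this Vandermonde matrix yields ${\bm c}_{\mathcal{F}}$. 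The only (very mild) obstacle is bookkeeping: one must check that the rack-level parity-check equations invoked in the MDS step come from genuine rows of the parity-check matrix of $\mathcal C_{\scriptscriptstyle{\textsf{MET-MSRR}}}$, i.e., from indices lying in $T$ of (\ref{eqT}). This is immediate because $[0,h-1]\subseteq[0,u-l-1]$, and the corresponding containment was already certified in the proof of Proposition~\ref{prop4}.
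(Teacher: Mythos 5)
Your argument proves a weaker statement than the corollary asserts. The corollary, read against the sentence that precedes it (``within the host rack, only $l$ surviving nodes are needed for the repair''), and against the verbatim substitution $u-l \to h$ in Theorem~\ref{thm5} (which leaves the phrase ``the remaining $l$ coordinates'' intact), promises that only $l$ of the surviving host-rack nodes participate. When $h<u-l$ there are $u-h>l$ survivors, and the $u-l-h$ survivors that are \emph{not} among the $l$ designated local helpers must not appear in the repair equations. Your proof truncates \eqref{eq10} to its top $h$ rows; to solve the resulting $h\times u$ system for the $h$ failed coordinates you must move \emph{all} $u-h$ surviving coordinates of ${\bm c}$ in rack $e^*$ to the right-hand side. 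If you grant yourself only $l$ survivors, the truncated system has $u-l>h$ unknowns and is underdetermined, so the Vandermonde inversion does not go through. Your parenthetical ``at least $l$ of them are present'' does not resolve this: the issue is not availability but the fact that the statement forbids using more than $l$.

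The paper closes this gap with one extra linear-algebra step that you are missing. Write $\Lambda^*$ for the $(u-l)\times u$ coefficient matrix in \eqref{eq10} at $e=e^*$, assume WLOG $\mathcal F=\{(e^*,0),\dots,(e^*,h-1)\}$ and that the $l$ chosen helpers are $(e^*,u-l),\dots,(e^*,u-1)$, and pick an $h\times(u-l)$ matrix $A^*$ of rank $h$ with $A^*\Lambda^* = (I_h \mid {\bm 0}\mid P)$, where the middle block of $u-l-h$ zero columns corresponds exactly to the non-helper survivors. Setting $(v_e^{(0)},\dots,v_e^{(h-1)})^\tau = A^*(w_e^{(0)},\dots,w_e^{(u-l-1)})^\tau$ for every $e$, each sequence $(v_0^{(i)},\dots,v_{\bar n-1}^{(i)})$ is a fixed linear combination of the codewords ${\bm w}({\bm c})^{(0)},\dots,{\bm w}({\bm c})^{(u-l-1)}$ and hence still lies in the $[\bar n,\bar d]$ MDS code of Proposition~\ref{prop4}. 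The helper racks should therefore send ${\bm s}_j=(v_{h_j}^{(0)},\dots,v_{h_j}^{(h-1)})$ rather than your $(w_{h_j}^{(0)},\dots,w_{h_j}^{(h-1)})$, after which the transformed equations $(I_h\mid{\bm 0}\mid P)\,{\bm c}_{e^*}^\tau=(v_{e^*}^{(0)},\dots,v_{e^*}^{(h-1)})^\tau$ involve only the $h$ failed coordinates and the $l$ helpers, as required. Your remaining steps (invoking Proposition~\ref{prop4} for $i\in[0,h-1]$, the MDS recovery of the host-rack rack-level symbols, the row-index bookkeeping in $T$) are all fine and carry over once the $A^*$ transformation is inserted.
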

\begin{proof}
Without loss of generality, we assume $\mathcal{F}=\{(e^*,0),...,(e^*,h-1)\}$  and the local helper nodes are $(e^*,u-l),...,(e^*,u-1)$. For simplicity, denote the matrix $(\lambda_{(e^*,g)}^i)_{i,g}$ on the left of (\ref{eq10}) when $e=e^*$ by $\Lambda^*$.
Since the first $u-l$ columns of $\Lambda^*$ are linearly independent, there exists a $h\times (u-l)$ matrix $A^*$ of rank $h$ such that $A^*\Lambda^*=(I_h\mid 0 \mid P)$ where $I_h$ is the $h\times h$ identity matrix, $P$ is a $h\times l$ matrix, and the middle all-zero matrix has $u-l-h$ columns. Denote $A^*(w_e^{(0)},...,w_e^{(u-l-1)})^\tau=(v_e^{(0)},...,v_e^{(h-1)})^\tau$ for all $e\in[0,\bar{n}-1]$. Then for $i\in[0,h-1]$, $(v_0^{(i)},v_1^{(i)},...,v_{\bar{n}-1}^{(i)})$ still falls in the $[\bar{n},\bar{d}]$ MDS code defined in (\ref{R2}) because it is a linear combination of ${\bm w}({\bm c})^{(0)},...,{\bm w}({\bm c})^{(u-l-1)}$ which are all codewords of the $[\bar{n},\bar{d}]$ MDS code. Then multiplying $A^*$ on both sides of (\ref{eq10}) for all $e\in[0,\bar{n}-1]$, the repair of nodes in $\mathcal{F}$ can be achieved by using $(v_0^{(i)},v_1^{(i)},...,v_{\bar{n}-1}^{(i)}), i\in[0,h-1]$ as the corresponding rack-level codewords.
\end{proof}

\begin{example}
We illustrate the node repair of the $\mathcal{C}_{\scriptscriptstyle{\textsf{MET-MSRR}}}$ given in Example \ref{eg1}. First the rack-level code $\mathcal{W}^{(i)}$, $i\in\{0,1\}$, has codewords ${\bm w}^{(i)}=(w^{(i)}_0,w^{(i)}_1,...,w^{(i)}_5)$ which are defined by
\begin{equation}\label{eqeg}\begin{pmatrix}
1&1&\cdots&1\\\lambda_{(e,0)}&\lambda_{(e,1)}&\cdots&\lambda_{(e,4)}
\end{pmatrix}\begin{pmatrix}
c_{(e,0)}\\\vdots\\c_{(e,4)}
\end{pmatrix}=\begin{pmatrix}
w_e^{(0)}\\w_e^{(1)}
\end{pmatrix}, \;\forall e\in[0,5]\;.\end{equation}
By restricting the rows of $H$ to those with $t\in\{0,5,10,15\}$ (or $t\in\{1,6,11,16\}$) one can see that ${\bm w}^{(0)}$ (or ${\bm w}^{(1)}$) falls in a $[6,2]$ GRS code. Suppose node $(0,0)$ fails and the local helper nodes are $(0,2),(0,3),(0,4)$. By the proof of Corollary \ref{coro6}, we first compute a $2\times 2$ invertible matrix $A$ such that $A\begin{pmatrix}
1&1&\cdots&1\\\lambda_{(0,0)}&\lambda_{(0,1)}&\cdots&\lambda_{(0,4)}
\end{pmatrix}=\begin{pmatrix}
1&0&*&\cdots&*\\0&1&*&\cdots&*
\end{pmatrix}$. Denote $\begin{pmatrix}
{\bm v}^{(0)}\\{\bm v}^{(1)}
\end{pmatrix}=A\begin{pmatrix}
{\bm w}^{(0)}\\{\bm w}^{(1)}
\end{pmatrix}$. Then ${\bm v}^{(i)}$, $i\in\{0,1\}$, falls in the same $[6,2]$ GRS code as ${\bm w}^{(0)}$ and ${\bm w}^{(1)}$ do. Multiply $A$ from left on both sides of (\ref{eqeg}) and restrict to the first row, then one can see $c_{(0,0)}$ can be recovered from $c_{(0,2)},c_{(0,3)},c_{(0,4)}$ and $v^{(0)}_0$. Moreover, $v^{(0)}_0$ can be recovered from any other two coordinates (corresponding to $\bar{d}=2$ helper racks) of ${\bm v}^{(0)}=(v^{(0)}_0,v^{(0)}_1,...,v^{(0)}_5)$, and for each helper rack $e$, the coordinate $v^{(0)}_e$ can be derived from all nodes in rack $e$.
\end{example}

\subsection{Relation with the optimal locally repairable codes}
Locally repairable codes (LRCs) are another kind of important codes for distributed storage besides regenerating codes, which aim to reduce the repair degree. Particularly, the optimal LRCs achieve the maximum code distance for given storage redundancy. In this section we show $\mathcal{C}_{\scriptscriptstyle{\textsf{MET-MSRR}}}$ degenerates into an LRC when $\bar{d}=0$. Moreover, since  $\mathcal{C}_{\scriptscriptstyle{\textsf{MET-MSRR}}}$ achieves the minimum storage, it turns out to be an optimal LRC. For more details, we first recall some basics of LRCs.

\begin{definition}(\cite{Prakash2012})
The $i$-th code symbol, $i\in[n]$, in an $[n,k]$ linear code $\mathcal{C}$ is said to have locality $(r,\delta)$ if there exists a subset $S_i\subseteq[n]$ such that
\begin{itemize}
  \item $i\in S_i$ and $|S_i|\leq r+\delta-1$;
  \item the minimum distance of $\mathcal{C}|_{S_i}$ is at least $\delta$.
\end{itemize}
\end{definition}

Let $d_{\min}$ denote the minimum distance of $\mathcal{C}$. It was proved in \cite{Prakash2012} that an $[n,k]$ linear code with all information symbols satisfying the locality $(r,\delta)$ must have the minimum distance
\begin{equation}\label{eq11}
  d_{\min}\leq n-k+1-(\lceil\frac{k}{r}\rceil-1)(\delta-1)\;.
\end{equation}
If a linear code with all code symbols satisfying the locality $(r,\delta)$ meets the bound (\ref{eq11}) with equality, it is called an optimal LRC. In exploring the relation between LRCs and MET-MSRR code, it is important to note the `$k$' in an $[n,k]$ linear code means the code dimension, while the `$k$' in an $(n=\bar{n}u,k,\bar{d},l)$ MET-RRC actually means the code distance $d_{\min}\geq n-k+1$ (because any $k$ nodes can recover the original file). Moreover, the $\mathcal{C}_{\scriptscriptstyle{\textsf{MET-MSRR}}}$ has code dimension $B$ rather than $k$.

\begin{theorem}\label{thm8}
When $\bar{d}=0$ and $1\leq l\leq u-1$, the $(n=\bar{n}u,k,\bar{d},l)$ $\mathcal{C}_{\scriptscriptstyle{\textsf{MET-MSRR}}}$ constructed in (\ref{PCE}) is an $[n,\bar{k}l+\tilde{u}_0]$ optimal LRC with locality $(r=l,\delta=u-l+1)$.
\end{theorem}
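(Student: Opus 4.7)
My plan is to verify the two defining properties of an optimal $(l,u-l+1)$ LRC at length $n$ and dimension $B=\bar{k}l+\tilde{u}_0$: the locality parameters, and the minimum distance meeting the Singleton-like bound (\ref{eq11}) with equality. The unifying strategy is to exhibit $\mathcal{C}_{\scriptscriptstyle{\textsf{MET-MSRR}}}$, up to a diagonal scaling, as the Tamo--Barg polynomial-evaluation code of \cite{LRCFamily2014}, from which both properties fall out.

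For the locality, I would specialize Proposition \ref{prop4} to $\bar{d}=0$. The rack-level PCEs (\ref{R2}) then form a full-rank $\bar{n}\times\bar{n}$ Vandermonde system in the unknowns $w_0^{(i)},\dots,w_{\bar{n}-1}^{(i)}$, forcing $\mathcal{W}^{(i)}=\{0\}$. Hence $\sum_{g=0}^{u-1}\lambda_{(e,g)}^{i}c_{(e,g)}=0$ for each rack $e$ and each $i\in[0,u-l-1]$, so the restriction of $\mathcal{C}_{\scriptscriptstyle{\textsf{MET-MSRR}}}$ to any rack is a subcode of the $[u,l]$ GRS code with locators $\{\lambda_{(e,g)}\}_{g}$, which has minimum distance $u-l+1=\delta$. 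Each rack, of size $u=r+\delta-1$, is thus a valid $(l,u-l+1)$-local repair group.

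For the distance, I would apply GRS duality. The dual of $\mathcal{C}_{\scriptscriptstyle{\textsf{MET-MSRR}}}$ is by construction the evaluation of $\mathrm{span}\{x^{t}:t\in T\}$ at $\{\lambda_{(e,g)}\}$. Using the classical identity $\sum_{(e,g)}v_{(e,g)}\,p(\lambda_{(e,g)})=[x^{n-1}]p(x)$ for any polynomial $p$ with $\deg p<n$, where $v_{(e,g)}=1/\!\prod_{(e',g')\neq(e,g)}(\lambda_{(e,g)}-\lambda_{(e',g')})$, coordinate-wise orthogonality gives that $\bm{c}\in\mathcal{C}_{\scriptscriptstyle{\textsf{MET-MSRR}}}$ iff $c_{(e,g)}=v_{(e,g)}g(\lambda_{(e,g)})$ for some polynomial $g$ whose coefficient support lies in $T^{*}:=(n-1)-T^{c}$. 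A direct enumeration using the involution $qu+r\mapsto(\bar{n}-q-1)u+(u-1-r)$ shows
\[
T^{*}=\{q^{*}u+r^{*}:q^{*}\in[0,\bar{k}-1],\,r^{*}\in[0,l-1]\}\cup\{\bar{k}u+r^{*}:r^{*}\in[0,\tilde{u}_{0}-1]\},
\]
which is exactly the monomial basis of the Tamo--Barg polynomials $f(x)=\sum_{j=0}^{l-1}f_{j}(x^{u})x^{j}$ with $\deg f_{j}\leq\bar{k}$ for $j<\tilde{u}_{0}$ and $\deg f_{j}\leq\bar{k}-1$ otherwise, confirming the promised reformulation. The maximum degree of such an $f$ is $\max T^{*}$, equal to $\bar{k}u+\tilde{u}_{0}-1$ when $\tilde{u}_{0}\geq 1$ and $(\bar{k}-1)u+l-1$ when $\tilde{u}_{0}=0$. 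Since the $n$ locators are distinct, any nonzero evaluation has at least $n-\max T^{*}$ nonzero entries, and a short arithmetic check verifies that $n-\max T^{*}$ equals the LRC upper bound $n-B+1-(\lceil B/l\rceil-1)(u-l)$ of (\ref{eq11}) in both cases. Combined with that upper bound, $d_{\min}$ attains the optimum and the code is an optimal LRC.

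The main obstacle I anticipate is the duality bookkeeping: carefully tracking the index-set involution $t\mapsto n-1-t$ together with the Lagrange-coefficient scaling $v_{(e,g)}$, and verifying that the resulting $T^{*}$ really coincides with the Tamo--Barg monomial support at dimension $B$ in both the $\tilde{u}_{0}=0$ and $\tilde{u}_{0}\geq 1$ regimes. Once that reformulation is secured, locality, dimension, and the tight distance all reduce to elementary counting.
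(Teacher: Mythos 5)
Your locality argument mirrors the paper's exactly: specialize Proposition~\ref{prop4} to $\bar{d}=0$, get $\mathcal{W}^{(i)}=\{0\}$, and conclude via (\ref{eq10}) that each rack restriction lies in a $[u,l]$ GRS code. For the distance, however, you take a genuinely different route. The paper simply observes that $[0,\bar{k}u+\tilde{u}_0-1]$ (when $u\nmid k$) or $[0,n-\bar{k}u+u-l-1]$ (when $u\mid k$) is contained in $T$, so $\mathcal{C}_{\scriptscriptstyle{\textsf{MET-MSRR}}}$ is a subcode of a GRS code of the right dimension, and the lower bound on $d_{\min}$ falls out with no further work. You instead invoke GRS duality to exhibit $\mathcal{C}_{\scriptscriptstyle{\textsf{MET-MSRR}}}$ as the Tamo--Barg evaluation code on the monomial support $T^*$, which is more informative (it makes the remark after Theorem~\ref{thm8} precise) but also more work, and your $T^*$ and the final arithmetic do check out in both the $\tilde{u}_0\geq 1$ and $\tilde{u}_0=0$ cases.

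There is a genuine gap in the duality step, though. You deduce ``$\bm{c}\in\mathcal{C}_{\scriptscriptstyle{\textsf{MET-MSRR}}}$ iff $c_{(e,g)}=v_{(e,g)}g(\lambda_{(e,g)})$ with $\mathrm{supp}(g)\subseteq T^*$'' from the classical identity $\sum v_{(e,g)}p(\lambda_{(e,g)})=[x^{n-1}]p(x)$, but that identity only applies to $\deg p<n$, and here the relevant products $x^t\cdot x^{t^*}$ with $t\in T$, $t^*\in T^*$ can have degree $t+t^*\geq n$ because $T$ is not a prefix of $[0,n-1]$. For generic locators the power sums $\sum v_i\lambda_i^m$ with $m\geq n$ do not vanish (e.g.\ $\sum v_i\lambda_i^n=\sum_i\lambda_i$), so the claimed characterization is simply false in general --- consider $n=3$, $T=\{0,2\}$, $T^*=\{1\}$, where $\sum v_i\lambda_i^3\neq 0$ unless $\sum\lambda_i=0$. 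It is rescued in the present setting only because of the rack structure of the locators: $\prod_{(e,g)}(x-\lambda_{(e,g)})=\prod_{e}(x^u-\xi^{eu})$ is a polynomial in $x^u$, hence $\sum v_{(e,g)}\lambda_{(e,g)}^m=0$ for every $m\geq n$ with $m\not\equiv n-1\pmod u$, and one checks that within each residue class modulo $u$ the set $T$ is a prefix (so $t>s$ with $t\equiv s\pmod u$, $t\in T$, $s\in T^c$ never occurs). You should make this structural input explicit --- or, more economically, replace the duality entirely by the paper's observation that a suitable initial interval $[0,d-1]$ lies inside $T$, which gives the same GRS subcode bound on $d_{\min}$ with none of these complications.
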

\begin{proof}
By the construction in (\ref{PCE}) and the definition of rack-level codes $\mathcal{W}^{(i)}$'s in Definition \ref{defRLCode}, the linear system (\ref{R2}) still holds for $\bar{d}=0$ and it implies $w_{e}^{(i)}=0$ for $e\in[0,\bar{n}-1]$. Then combining with (\ref{eq10}) we know when $\bar{d}=0$, $(c_{(e,0)},c_{(e,1)},...,c_{(e,u-1)})$ falls in an $[u,l]$ GRS code for all $e\in[0,\bar{n}-1]$. That is, the $\mathcal{C}_{\scriptscriptstyle{\textsf{MET-MSRR}}}$ restricted to each rack is an $[u,l]$ MDS code, so each code symbol has locality $(r=l,\delta=u-l+1)$. Next we verify $\mathcal{C}_{\scriptscriptstyle{\textsf{MET-MSRR}}}$ meets (\ref{eq11}) with equality in the following two cases.

(1)~$u\nmid k$ : Then $\tilde{u}_0=\min\{l,u_0\}\geq 1$. Since the $(n,k,\bar{d},l)$ $\mathcal{C}_{\scriptscriptstyle{\textsf{MET-MSRR}}}$ is an $[n,B=\bar{k}l+\tilde{u}_0]$ LRC with locality $(r=l,\delta=u-l+1)$, by (\ref{eq11}) it has $d_{\min}\leq n-(\bar{k}l+\tilde{u}_0)+1-(\lceil\frac{\bar{k}l+\tilde{u}_0}{l}\rceil-1)(u-l)=n-(\bar{k}u+\tilde{u}_0)+1$. On the other hand, since $\mathcal{C}_{\scriptscriptstyle{\textsf{MSRR}}}$ is a subcode of an $[n,\bar{k}u+\tilde{u}_0]$ GRS code, it implies $d_{\min}\geq n-(\bar{k}u+\tilde{u}_0)+1$. As a result, the bound (\ref{eq11}) is met with equality.

\vspace{4pt}
(2)~$u\mid k$ : It follows $k=\bar{k}u$ and $\tilde{u}_0=0$. Thus $\mathcal{C}_{\scriptscriptstyle{\textsf{MET-MSRR}}}$ is an $[n,B=\bar{k}l]$ LRC with locality $(r=l,\delta=u-l+1)$, and by (\ref{eq11}) it has $d_{\rm min}\leq n-\bar{k}l+1-(\lceil\frac{\bar{k}l}{l}\rceil-1)(u-l)=n-(\bar{k}-1)u-l+1$. On the other hand, from the definition of $T$ in (\ref{eqT}) one can see $[0,n-\bar{k}u+(u-l-1)]\subseteq T$. Thus in this case $\mathcal{C}_{\scriptscriptstyle{\textsf{MET-MSRR}}}$ is a subcode of an $[n,(\bar{k}-1)u+l]$ GRS code. As a result, $d_{\min}\geq n-(\bar{k}-1)u-l+1$ and the bound (\ref{eq11}) is met with equality.
\end{proof}

\begin{remark}
The optimal LRC with all-symbol locality $(r,\delta)$ implied from our construction of $\mathcal{C}_{\scriptscriptstyle{\textsf{MET-MSRR}}}$ in the case $\bar{d}=0$ turns out to be another perspective of the optimal LRCs given in \cite{LRCFamily2014}. Specifically, the authors in \cite{LRCFamily2014} constructed LRCs from the generator matrix with rows drawn from a Vandermonde matrix according to a good polynomial, while we construct code from the parity-check matrix by adding rows to a Vandermonde matrix. Both the good polynomial in \cite{LRCFamily2014} and the rack-level code defined here are used to realize locality within racks, while the Vandermonde matrices are used to control the minimum distance. However, we further design the rack-level code as an MDS code or an MBR code (in the next section) to attain the optimal cross-rack repair bandwidth.
\end{remark}

\subsection{A systematic version of the MET-MSRR code}\label{sec3c}
Systematic codes are desirable in practical use because the original data symbols are stored in some nodes in uncoded form, thus one can directly obtain the original data symbols by accessing the corresponding nodes. Particularly, for $[n,k]$ systematic MDS codes with sub-packetization $\alpha$, there exist $k$ nodes that store exactly the original $k\alpha$ data symbols. However, in MET-RRCs because $\bar{d}<\bar{k}$ is required, redundancy is introduced among any $k$ nodes for both MET-MSRR codes and MET-MBRR codes. Therefore, we give the following definition of systematic MET-RRCs.

\begin{definition}\label{def12}(Systematic MET-RRC)
 An $(\bar{n}u,k,\bar{d},l)$ MET-RRC with parameters $\alpha,\beta,B$ is called systematic if there exist $\hat{k}~(\leq k)$ nodes such that the $\hat{k}\alpha$ symbols stored on them contain the original $B$ data symbols. Moreover, the $\hat{k}$ nodes are called systematic nodes and the set of the $B$ coordinates that store exactly the original data symbols is called an information set.
\end{definition}

In the following, we first describe an information set for the $\mathcal{C}_{\scriptscriptstyle{\textsf{MET-MSRR}}}$ constructed in (\ref{PCE}) and then develop a systematic encoding process accordingly.

\begin{theorem}\label{thm13}
Let $X$ be a set of coordinates defined as follows
\begin{equation}\label{eq23}X=X_1\cup X_2\cup X_3\end{equation} where
\begin{equation*}
  X_i=\begin{cases}
  [0,\bar{d}-1]\times [0,u-1]\;,~~~~~~~~{\rm for~}i=1\;;\\
  [\bar{d},\bar{k}-1]\times [0,l-1]\;,~~~~~~~~{\rm for~}i=2\;;\\
  \{(\bar{k},g)\mid g\in[0,\tilde{u}_0-1]\}\;,~~~~{\rm for~}i=3\;.
  \end{cases}
\end{equation*}
Particularly when $\tilde{u}_0=0$, set $X_3=\emptyset$. Then,  $X$ is an information set of $\mathcal{C}_{\scriptscriptstyle{\textsf{MET-MSRR}}}$.
\end{theorem}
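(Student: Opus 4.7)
The plan is to prove that $X$ is an information set by establishing that the $(n-B) \times (n-B)$ sub-matrix $H|_{[n] \setminus X}$ is invertible; equivalently, that no non-zero codeword of $\mathcal{C}_{\scriptscriptstyle{\textsf{MET-MSRR}}}$ is supported on $[n] \setminus X$. A direct count gives $|X| = \bar{d}u + (\bar{k}-\bar{d})l + \tilde{u}_0 = B$, so this reformulation is well-posed.

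First I would reorganize the row set $T$ by residue modulo $u$. Writing $t = i + ju$ with $i \in [0, u-1]$ and using $\eta^u = 1$ together with the factorization $\lambda_{(e,g)}^{i+ju} = \lambda_{(e,g)}^i \xi^{euj}$, each parity-check equation in (\ref{PCE}) collapses to $\sum_{e=0}^{\bar{n}-1} \xi^{euj} w_e^{(i)}(\bm{c}) = 0$. Enumerating the admissible $j$-values for each $i$ (by combining the block $[0, n-\bar{k}u-\tilde{u}_0-1]$ with the $T_i$'s) yields three regimes: for $i \in [0, u-l-1]$, the rack-level vector $(w_e^{(i)})_e$ lies in the $[\bar{n}, \bar{d}]$ MDS code of Proposition \ref{prop4}; for $i \in [u-l, u-\tilde{u}_0-1]$ it lies in a $[\bar{n}, \bar{k}]$ MDS code; and for $i \in [u-\tilde{u}_0, u-1]$ it lies in a $[\bar{n}, \bar{k}+1]$ MDS code.

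The core of the argument is a zero-propagation chain. Assume $\tilde{\bm{c}} \in \mathcal{C}_{\scriptscriptstyle{\textsf{MET-MSRR}}}$ vanishes on $X$. From $X_1$, $w_e^{(i)}(\tilde{\bm{c}}) = 0$ for every $i$ and every $e \in [0, \bar{d}-1]$. For $i \in [0, u-l-1]$ the $[\bar{n}, \bar{d}]$ MDS property upgrades this to $w_e^{(i)}(\tilde{\bm{c}}) = 0$ for all $e$; restricting to each rack $e \in [\bar{d}, \bar{k}-1]$ and using the vanishing on $X_2$ yields a $(u-l) \times (u-l)$ Vandermonde system in the nodes $\eta^l, \ldots, \eta^{u-1}$ on the unknowns $\tilde{c}_{(e,g)}$, $g \in [l, u-1]$, forcing them all to vanish. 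Since racks $[0, \bar{k}-1]$ are now wholly zero in $\tilde{\bm{c}}$, the $[\bar{n}, \bar{k}]$ MDS property for $i \in [u-l, u-\tilde{u}_0-1]$ propagates $w_e^{(i)}(\tilde{\bm{c}}) = 0$ to all $e$, and a $(u-\tilde{u}_0) \times (u-\tilde{u}_0)$ Vandermonde inversion on rack $\bar{k}$ using $X_3$ kills $\tilde{c}_{(\bar{k}, g)}$ for $g \in [\tilde{u}_0, u-1]$. Finally, for $i \in [u-\tilde{u}_0, u-1]$ the $[\bar{n}, \bar{k}+1]$ MDS property yields $w_e^{(i)}(\tilde{\bm{c}}) = 0$ for every $e$, and a full $u \times u$ DFT inversion on each remaining rack $e \in [\bar{k}+1, \bar{n}-1]$ forces $\tilde{c}_{(e,g)} = 0$ throughout, so $\tilde{\bm{c}} = 0$.

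The main obstacle is the modulo-$u$ bookkeeping in the second step: one must precisely parse $T$ to justify the three rack-level MDS classifications, and verify that the degenerate endpoints $\tilde{u}_0 \in \{0, l\}$ (in which one of the three $i$-ranges collapses to $\emptyset$ and rack $\bar{k}$ is absorbed into an adjacent stage of the propagation) are still handled correctly by the same scheme. Once these classifications are fixed, the Vandermonde and MDS inversions themselves are routine.
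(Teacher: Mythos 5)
Your proof is correct, and the first half of the argument coincides with the paper's: use $X_1$ to zero out racks $[0,\bar{d}-1]$, propagate $w_e^{(i)}=0$ through the $[\bar{n},\bar{d}]$ MDS rack-level code of Proposition~\ref{prop4}, then combine with $X_2$ and a Vandermonde inversion per rack to conclude that racks $[\bar{d},\bar{k}-1]$ are wholly zero. Where you diverge is the endgame. The paper stops after noting that, together with $X_3$, the first $\bar{k}u+\tilde{u}_0$ coordinates of ${\bm c}$ are zero, and then invokes the fact that $\mathcal{C}_{\scriptscriptstyle\textsf{MET-MSRR}}$ is a subcode of the $[n,\bar{k}u+\tilde{u}_0]$ GRS code built from the Vandermonde block $t\in[0,n-\bar{k}u-\tilde{u}_0-1]$ of $T$: any $\bar{k}u+\tilde{u}_0$ coordinates of that GRS code form an information set, so ${\bm c}={\bm 0}$ at once. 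You instead extend Definition~\ref{defRLCode} to all residues $i\in[0,u-1]$, parse $T$ modulo $u$ into the three MDS regimes $[\bar{n},\bar{d}]$, $[\bar{n},\bar{k}]$, $[\bar{n},\bar{k}+1]$, and run two further rounds of rack-level propagation plus per-rack Vandermonde inversion. The classification you give is accurate (I checked the $j$-ranges in each residue class, including the degenerate cases $\tilde{u}_0\in\{0,l\}$), so the chain does close; it just reproves, via a finer decomposition of the parity-check matrix, the one-line consequence of the subcode relation that the construction hands you for free. Your version is valid and arguably makes the role of each block of $T$ more explicit, but it does require the extra bookkeeping you flag, whereas the paper's finish sidesteps it entirely.
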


\begin{proof}
It can be seen that $|X|=\bar{d}u+(\bar{k}-\bar{d})l+\tilde{u}_0=B$. Then, it suffices to prove that for any ${\bm c}\in \mathcal{C}_{\scriptscriptstyle{\textsf{MET-MSRR}}}$, ${\bm c}_X={\bm 0}$ always implies ${\bm c}={\bm 0}$. Consider the rack-level codeword $\bm{w}({\bm c})^{(i)}$ defined in Definition \ref{defRLCode}, one can first conclude that $w_e^{(i)}=0$ for $e\in[0,\bar{d}-1]$ and $i\in[0,u-l-1]$ because ${\bm c}_{X_1}={\bm 0}$. Then by Proposition \ref{prop4}, $\bm{w}({\bm c})^{(i)}$ falls in an $[\bar{n},\bar{d}]$ MDS code, so $w_e^{(i)}=0$ for $e\in[0,\bar{d}-1]$ implies $w_e^{(i)}=0$ for all $e\in[0,\bar{n}-1]$. Combining with \eqref{eq10} we know $(c_{(e,0)},...,c_{(e,u-1)})$ falls in an $[u,l]$ GRS code for all $e\in[0,\bar{n}-1]$. Note  $X_2=[\bar{d},\bar{k}-1]\times [0,l-1]$. Therefore, ${\bm c}_{X_2}={\bm 0}$ implies $c_{(e,g)}=0$ for all $(e,g)\in[\bar{d},\bar{k}-1]\times [0,u-1]$. Combining with ${\bm c}_{X_3}={\bm 0}$, it follows the first $\bar{k}u+\tilde{u}_0$ coordinates of ${\bm c}$ are all zeros. Since $\mathcal{C}_{\scriptscriptstyle{\textsf{MET-MSRR}}}$ is a subcode of an $[n,\bar{k}u+\tilde{u}_0]$ GRS code, then ${\bm c}={\bm 0}$.
\end{proof}

Furthermore, fixing ${\bm c}_X$ to be the original data symbols, then one can derive a systematic encoding process for $\mathcal{C}_{\scriptscriptstyle{\textsf{MET-MSRR}}}$. The details are given in Algorithm \ref{alg}.
Note the Line 6,7,9 of Algorithm \ref{alg} all depend on solving linear systems with Vandermonde coefficient matrices. Particularly in Line 9, since the first $\bar{k}u+\tilde{u}_0$ coordinates have been determined, restricting the defining equation (\ref{PCE}) to $t\in [0,n-\bar{k}u-\tilde{u}_0-1]$, one can immediately solve the remaining coordinates of ${\bm c}$ from the $\bar{k}u+\tilde{u}_0$ known coordinates. In the following, we give an example to illustrate the systematic encoding process.

\begin{algorithm}[ht]
\caption{The systematic encoding process of $\mathcal{C}_{\scriptscriptstyle{\textsf{MET-MSRR}}}$}\label{alg}
\begin{algorithmic}[1]
\REQUIRE Original data symbols $s_1,...,s_B\in F$.
\ENSURE A codeword ${\bm c}\in \mathcal{C}_{\scriptscriptstyle{\textsf{MET-MSRR}}}$ such that ${\bm c}_X=(s_1,...,s_B)$, where $X$ is defined in (\ref{eq23}).
\STATE Set ${\bm c}_X=(s_1,...,s_B)$;
\FOR{$e\in[0,\bar{d}-1]$}
\STATE Compute ${w}^{(i)}_{e}$ for all $i\in[0,u-l-1]$ according to Definition \ref{defRLCode};
\ENDFOR
\FOR{$e\in[\bar{d},\bar{k}-1]$}
\STATE Compute ${w}^{(i)}_{e}$ for all $i\in[0,u-l-1]$ from the equation \eqref{R2};
\STATE Compute $c_{(e,g)}$ for all $g\in[l,u-1]$ from the equation \eqref{eq10};
\ENDFOR
\STATE Determine the remaining coordinates of ${\bm c}$ by the data reconstruction process.
\end{algorithmic}
\end{algorithm}

\begin{example}
Suppose $n=16, u=4, k=13, l=2, \bar{d}=2$. Then $\bar{n}=4,\bar{k}=3,\tilde{u}_0=1$, and the $\mathcal{C}_{\scriptscriptstyle{\textsf{MET-MSRR}}}$ has $\alpha=\beta=1$,
$B=\bar{k}l+\tilde{u}_0+(u-l)\bar{d}=11$. Denote the $B$ data symbols as $s_1,...,s_{11}$.
We need to determine a codeword ${\bm c}\in \mathcal{C}_{\scriptscriptstyle{\textsf{MET-MSRR}}}$ such that ${\bm c}$ restricted to the first $\bar{k}u+\tilde{u}_0=13$ coordinates, denoted as ${\bm c}_{[13]}$, has the following form
\begin{equation*}
{\bm c}_{[13]}\!=(s_1,s_2,s_3,s_4, s_5, s_6, s_7, s_8, s_9, s_{10}, *,*,s_{11}
)\;,	
\end{equation*}
where $*$ denotes the unknown coordinate to be determined later.

First, according to Definition \eqref{defRLCode}, we compute
$w_{0}^{(i)}$ and $w_{1}^{(i)}$ respectively from the symbols $s_1,s_2,s_3,s_4$ and $s_5,s_6,s_7,s_8$, i.e., $w_{0}^{(i)}=\sum_{g=0}^3\lambda_{(0,g)}^is_{g+1}$ and $w_{1}^{(i)}=\sum_{g=0}^3\lambda_{(1,g)}^is_{g+5}$, where $i\in[0,1]$.

Then by Proposition \ref{prop4} and the equation (\ref{R2}), we can recover $w^{(i)}_{2},w^{(i)}_{3}$ as follows
$$\begin{pmatrix}w^{(i)}_{2}\\w^{(i)}_{3}\end{pmatrix}=\begin{pmatrix}1&1\\\ \xi^{2u}&\xi^{3u}\\
\end{pmatrix}^{-1}\begin{matrix}\begin{pmatrix}-w^{(i)}_{0}-w^{(i)}_{1}\\-w^{(i)}_{0}-\xi^{u}w^{(i)}_{1}\end{pmatrix}\end{matrix}\;,~~~\forall i\in[0,1]\;.$$

So far, we have obtained the rack-level codeword ${\bm w}({\bm c})^{(i)}=(w_0^{(i)},w_1^{(i)},w_2^{(i)},w_3^{(i)})$ associated with ${\bm c}$ for $i\in[0,1]$.
Then by the equation \eqref{eq10}, one has
\begin{equation}\label{eg2}
  \begin{pmatrix}
    1&1&1&1\\\lambda_{(e,0)}&\lambda_{(e,1)}&\lambda_{(e,2)}&\lambda_{(e,3)}
  \end{pmatrix}
  \begin{pmatrix}
    c_{(e,0)}\\c_{(e,1)}\\c_{(e,2)}\\c_{(e,3)}
  \end{pmatrix}=\begin{pmatrix}
    w_e^{(0)}\\w_e^{(1)}
  \end{pmatrix}\;,~~\forall e\in[0,3]\;.
\end{equation}
Particularly when $e=2$, one can derive $c_{(2,2)}$ and $c_{(2,3)}$ from (\ref{eg2}) because $c_{(2,0)}=s_9, c_{(2,1)}=s_{10}$ and $w_2^{0},w_2^{(1)}$ are already obtained. Note $c_{(2,0)},c_{(2,1)}$ are exactly the unknown coordinates in ${\bm c}_{[13]}$. As a result, $\bar{k}u+\tilde{u}_0=13$ coordinates of the codeword ${\bm c}$ have been determined, one can continue to determine the remaining coordinates through a data reconstruction process of a $[16,13]$ GRS code.
\end{example}

\section{Explicit construction of MET-MBRR codes} \label{sec3}
In this section we give an explicit construction of the MET-MBRR code. Our construction works for the scalar case $\beta=1$. Then by (\ref{mbrr}) it has
\begin{equation}\label{eq133}
\alpha=\bar{d},~B=\bar{d}(\bar{k}l+\tilde{u}_0+\frac{(u-l)(\bar{d}+1)}{2})\;.
\end{equation}
Note unlike the MET-MSRR code where each node $(e,g)$ just stores a symbol $c_{(e,g)}\in F$ (because $\alpha=1$), in the MET-MBRR code each node stores a vector  ${\bm c}_{(e,g)}\in F^{\bar{d}}$. Hereafter we use bold letters to denote row vectors.

\begin{construction}
Let $F$ be a finite field satisfying $|F|>n$ and $u\mid (|F|-1)$.
\begin{enumerate}
  \item Construct a $(\bar{k}u+\tilde{u}_0)\times \bar{d}$ matrix $M$ storing the $B$ data symbols.
  \begin{itemize}
  \item Divide the range $[0,\bar{k}u+\tilde{u}_0-1]$ into $u-l+1$ subsets, i.e. $[0,\bar{k}u+\tilde{u}_0-1]=\cup_{i=0}^{u-l}I_i$ where
\begin{equation}\label{eq13}
  I_i=\begin{cases}
  \{\delta u+(i+l-1)\mid 0\leq \delta\leq \bar{k}-1\}~~~~{\rm for~}i\in[u-l]\\
  [0,\bar{k}u+\tilde{u}_0-1]\setminus\bigcup_{i=1}^{u-l}I_i~~~~{\rm for~}i=0
  \end{cases}
\end{equation}
It is easy to check $I_i\subseteq[0,\bar{k}u+\tilde{u}_0-1]$ and $|I_0|=\bar{k}l+\tilde{u}_0$.
\item Label the rows of $M$ by indices from $[0,\bar{k}u+\tilde{u}_0-1]$, and denote by $M_i$ the matrix $M$ restricted to the rows with labels in $I_i$. Moreover, \begin{equation}\label{eq14}M_i=\begin{pmatrix}
S_i\\{\bm 0}
\end{pmatrix}\in F^{\bar{k}\times\bar{d}}{\rm~~for~}i\in[u-l]{\rm~~and ~~}M_0\in F^{(\bar{k}l+\tilde{u}_0)\times\bar{d}}\end{equation}
where $S_i$ is $\bar{d}\times \bar{d}$ symmetric matrix containing $\bar{d}(\bar{d}+1)/2$ data symbols, and ${\bm 0}$ is a $(\bar{k}-\bar{d})\times \bar{d}$ all-zero matrix. Therefore, $M_i$'s for $i\in[u-l]$ contain $(u-l)(\bar{d}+1)\bar{d}/2$ data symbols in total, and $M_0$ contains the remaining $\bar{d}(\bar{k}l+\tilde{u}_0)$ data symbols.
\end{itemize}
\item Define an $n\times (\bar{k}u+\tilde{u}_0)$ matrix $\Lambda$: for $(e,g)\in[0,\bar{n}\!-\!1]\!\times\![0,u\!-\!1]$ and $j\in[0,\bar{k}u+\tilde{u}_0-1]$, the $((e,g),j)$-th entry of $\Lambda$ is $\lambda_{(e,g)}^j$, where $\lambda_{(e,g)}$ is defined as in (\ref{eq6}).
\item Represent each codeword of the MET-MBRR code $\mathcal{C}_{\scriptscriptstyle{\textsf{MET-MBRR}}}$ by an $n\times \bar{d}$ matrix $C$ where each node stores a row of $C$. Then
    \begin{equation}\label{eq17}
\mathcal{C}_{\scriptscriptstyle{\textsf{MET-MBRR}}}=\{\Lambda M\in F^{n\times\bar{d}}\mid M {\rm ~ constructed~in~(\ref{eq14})~for~all~}B{\rm ~data~symbols}\}
\end{equation}
\end{enumerate}
\end{construction}

We give an example to illustrate the placement of the $B$ data symbols in the matrix $M$.
\begin{example}
Suppose $n=16,u=4,k=13,\bar{d}=2,l=2$. Then $\bar{n}=4,\bar{k}=3$ and $\tilde{u}_0=1$. According to (\ref{eq133}), it has $B=20$. Then we put $20$ data symbols, denoted as $s_1,...,s_{20}$, into a $13\times 2$ matrix $M$. By (\ref{eq13}) and (\ref{eq14}), we may assume the transpose of $M$, i.e., $M^\tau$, has the following form
\begin{equation*}
\setlength{\arraycolsep}{1pt}
M^\tau\!=\!\left(\begin{array}{cccc|cccc|cccc|c}
s_1&s_3&\boxed{s_5}&{s_7}&s_9&s_{11}&\boxed{s_6}&{s_8}&s_{15}&s_{17}&\boxed{0}&{0}&s_{19}\\
s_2&s_4&\boxed{s_6}&{s_8}&s_{10}&s_{12}&\boxed{s_{13}}&{s_{14}}&s_{16}&s_{18}&\boxed{0}&{0}&s_{20}
\end{array}\!\right)\;,
\end{equation*}
where for convenience, we represent the entries of $M_1$ by framed elements. It is easy so see
\begin{equation*}
\setlength{\arraycolsep}{1pt}
M_1^\tau=\left(\!\begin{array}{ccc}s_5&s_6&0\\s_6&s_{13}&0\end{array}\right),~~
M_2^\tau=\left(\!\begin{array}{ccc}s_7&s_8&0\\s_8&s_{14}&0\end{array}\right)\;.
\end{equation*}
\end{example}

Then we prove for each codeword  $C=\Lambda M\in\mathcal{C}_{\scriptscriptstyle{\textsf{MET-MBRR}}}$, any $k$ rows of $C$ are sufficient to recover the matrix $M$, namely, any $k$ nodes can recover the original data file. Actually, since $k\geq \bar{k}u+\tilde{u}_0$, from any set of $k$ nodes we arbitrarily choose a subset containing $\bar{k}u+\tilde{u}_0$ nodes. Denote this subset by $\mathcal{R}$ and let $C_{\mathcal{R}}$ be the submatrix of $C$ restricted to the rows in $\mathcal{R}$. Then $C_{\mathcal{R}}=\Lambda_{\mathcal{R}}M$. From the definition of $\Lambda$ one can see $\Lambda_{\mathcal{R}}$ is an invertible Vandermonde matrix. Therefore $M=\Lambda_{\mathcal{R}}^{-1}C_{\mathcal{R}}$. Next we prove the repair property of $\mathcal{C}_{\scriptscriptstyle{\textsf{MET-MBRR}}}$.

\subsection{Repair of multiple nodes in one rack}
Similarly, we define $u-l$ rack-level codes from $\mathcal{C}_{\scriptscriptstyle{\textsf{MET-MBRR}}}$ and then accomplish the repair by using the rack-level codes as a bridge. First, we state some basic facts about the finite field $F$:
\begin{itemize}
  \item Since $u\mid |F|-1$, it follows ${\rm Char}(F)\nmid u$, i.e., $u\neq 0$ in $F$.
  \item Because $\eta\in F$ has multiplicative order $u$, then for any integer $x$, \begin{equation}\label{eq18}
        \sum_{g=0}^{u-1}\eta^{gx}=\begin{cases}
          u~~~{\rm if~}u\mid x\\0~~~{\rm if~}u\nmid x
        \end{cases}\;.
      \end{equation}
\end{itemize}

\begin{definition}[Rack-level Code]\label{def9}
For each codeword $C\in F^{n\times\bar{d}}$ in $\mathcal{C}_{\scriptscriptstyle{\textsf{MET-MBRR}}}$ and for $i\in[0,u-l-1]$, define
${\bm w}(C)^{(i)}=({\bm w}_0^{(i)},{\bm w}_1^{(i)},...,
{\bm w}_{\bar{n}-1}^{(i)})\in (F^{\bar{d}})^{\bar{n}}$, where ${\bm w}_e^{(i)}=u^{-1}\xi^{-e(l+i)}\sum_{g=0}^{u-1}\eta^{(-l-i)g}{\bm c}_{(e,g)}$ for $e\in[0,\bar{n}-1]$ and ${\bm c}_{(e,g)}$ is the $\bar{d}$-dimensional vector formed by the $(e,g)$-th row of $C$. Then $\mathcal{W}^{(i)}=\{{\bm w}(C)^{(i)}\mid C\in\mathcal{C}_{\scriptscriptstyle{\textsf{MET-MBRR}}}\}$ is called a rack-level code from $\mathcal{C}_{\scriptscriptstyle{\textsf{MET-MBRR}}}$.
\end{definition}

\begin{proposition}\label{prop10}
For $i\in[0,u-l-1]$, all codewords in the rack-level codes $\mathcal{W}^{(i)}$ from $\mathcal{C}_{\scriptscriptstyle{\textsf{MET-MBRR}}}$ fall in  the same $(\bar{n},\bar{d},\bar{d})$ scalar (i.e., $\beta=1$) MBR code.
\end{proposition}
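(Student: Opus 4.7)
The plan is to substitute the construction $C = \Lambda M$ into Definition \ref{def9} and then invoke the orthogonality relation (\ref{eq18}) to eliminate the inner sum over the intra-rack index $g$. Denoting by $\bm{m}_j$ the $j$-th row of $M$, we have $\bm{c}_{(e,g)} = \sum_{j=0}^{\bar{k}u+\tilde{u}_0-1} \xi^{ej}\eta^{gj}\bm{m}_j$, and substituting this and exchanging sums gives
$$\bm{w}_e^{(i)} = u^{-1}\sum_{j=0}^{\bar{k}u+\tilde{u}_0-1} \xi^{e(j-l-i)}\Big(\sum_{g=0}^{u-1}\eta^{g(j-l-i)}\Big)\bm{m}_j.$$
By (\ref{eq18}), the inner sum equals $u$ when $u \mid (j-l-i)$ and vanishes otherwise; since $i \in [0, u-l-1]$, the surviving values of $j$ are exactly $(l+i)+\delta u$ for $\delta \in [0,\bar{k}-1]$, which form the block $I_{i+1}$ of (\ref{eq13}).

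Using the block structure (\ref{eq14})---the last $\bar{k}-\bar{d}$ rows of $M_{i+1}$ are zero and the first $\bar{d}$ rows form the symmetric matrix $S_{i+1}$---we arrive at the compact form
$$\bm{w}^{(i)} = \Psi S_{i+1},$$
where $\Psi \in F^{\bar{n}\times\bar{d}}$ has $(e,\delta)$-entry $(\xi^u)^{e\delta}$ for $e \in [0,\bar{n}-1]$ and $\delta \in [0,\bar{d}-1]$. I would then recognize this as the Rashmi--Shah--Kumar product-matrix construction of a scalar $(\bar{n},\bar{d},\bar{d})$ MBR code: the message matrix $S_{i+1}$ is symmetric of size $\bar{d}\times\bar{d}$, carrying exactly $\bar{d}(\bar{d}+1)/2$ independent symbols, and $\Psi$ is the encoding matrix.

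To verify that this really is an $(\bar{n},\bar{d},\bar{d})$ MBR code, one checks that any $\bar{d}$ rows of $\Psi$ are linearly independent: since $u \mid |F|-1$ and $|F| > n = \bar{n}u$, the element $\xi^u$ has multiplicative order $(|F|-1)/u \ge \bar{n}$, so the evaluation points $\xi^{ue}$, $e\in[0,\bar{n}-1]$, are pairwise distinct and every $\bar{d}\times\bar{d}$ submatrix of $\Psi$ is an invertible Vandermonde matrix. Crucially, $\Psi$ does not depend on $i$, so all $u-l$ rack-level codes $\mathcal{W}^{(0)},\ldots,\mathcal{W}^{(u-l-1)}$ lie inside one and the same scalar MBR code.

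The argument is largely bookkeeping; the main obstacle is aligning the set of surviving indices $\{(l+i)+\delta u : 0 \le \delta \le \bar{k}-1\}$ with the block $I_{i+1}$ used in the data placement, and confirming that the zero-padding of $M_{i+1}$ truncates the Vandermonde sum at precisely $\delta = \bar{d}-1$ (which uses the inequality $\tilde{u}_0 \le l \le l+i \le u-1$ to pin down both endpoints). Once these identifications are in place, the MBR conclusion follows immediately from the symmetry of $S_{i+1}$ and the Vandermonde property of $\Psi$.
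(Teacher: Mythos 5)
Your proof is correct and follows essentially the same route as the paper's: substitute $C=\Lambda M$ into Definition~\ref{def9}, invoke the orthogonality identity~(\ref{eq18}) to kill all but the residue class $j\equiv l+i \pmod u$, use the zero-padding in~(\ref{eq14}) to truncate the sum to $\delta\in[0,\bar d-1]$, and recognize $\bm{w}^{(i)}=\Psi S_{i+1}$ as the Rashmi--Shah--Kumar product-matrix MBR form. The only cosmetic difference is that you carry out the sum row-by-row over $\bm m_j$ while the paper examines the $j$-th column of $(1,\eta^{-l-i},\ldots,\eta^{(u-1)(-l-i)})\Lambda_e$; these are the same computation, and your explicit check that $\xi^u$ has order $(|F|-1)/u\ge\bar n$ (hence $\Psi$ is Vandermonde on distinct points) is a small but welcome addition of detail.
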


\begin{proof}
For each codeword ${\bm w}(C)^{(i)}=({\bm w}_0^{(i)},{\bm w}_1^{(i)},...,
{\bm w}_{\bar{n}-1}^{(i)})\in (F^{\bar{d}})^{\bar{n}}\in\mathcal{W}^{(i)}$, by Definition \ref{def9} it can be seen
$${\bm w}_e^{(i)}=u^{-1}\xi^{-e(l+i)}(1,\eta^{-l-i},...,\eta^{(u-1)(-l-i)})C_e=
u^{-1}\xi^{-e(l+i)}(1,\eta^{-l-i},...,\eta^{(u-1)(-l-i)})\Lambda_eM$$ where $C_e$ and $\Lambda_e$ respectively denote the matrix $C$ and $\Lambda$ restricted to rows labeled by $(e,g),0\leq g\leq u-1$. For $j=au+b\in[0,\bar{k}u+\tilde{u}_0-1]$ where $0\leq b<u$, the $j$-th column of $(1,\eta^{-l-i},\eta^{2(-l-i)},...,\eta^{(u-1)(-l-i)})\Lambda_e$ equals
\begin{eqnarray}
\sum_{g=0}^{u-1}\eta^{(-l-i)g}\lambda_{(e,g)}^j &=& \sum_{g=0}^{u-1}\eta^{(-l-i)g}(\xi^{e}\eta^g)^{au+b}\notag\\
 &=& \xi^{ej}\sum_{g=0}^{u-1}\eta^{(b-l-i)g}\label{eq19}\\
 &=&\begin{cases}
 u\xi^{ej}&{\rm ~~if~}b\equiv l+i {\rm ~mod~} u\\
 0 &{\rm ~~otherwise~}\label{eq20}
 \end{cases}
\end{eqnarray}
where (\ref{eq19}) comes from $\eta^u=1$ and (\ref{eq20}) follows from (\ref{eq18}).
Combining with the construction of $M$ in (\ref{eq13}) and (\ref{eq14}), we conclude that ${\bm w}_e^{(i)}=(1,\xi^{eu},...,\xi^{eu(\bar{d}-1)})S_{i+1}$. As a result, it has

\begin{eqnarray}
  \begin{pmatrix}
  {\bm w}_0^{(i)}\\{\bm w}_1^{(i)}\\\vdots\\{\bm w}_{\bar{n}-1}^{(i)}
  \end{pmatrix} &\!=\!& \begin{pmatrix}
    1&1&\cdots&1\\1&\xi^{u}&\cdots&\xi^{(\bar{d}-1)u}\\
    \vdots&\vdots&\vdots&\vdots\\\label{Sys}
    1&\xi^{(\bar{n}-1)u}&\cdots&\xi^{(\bar{n}-1)(\bar{d}-1)u}
  \end{pmatrix}S_{i+1}\label{eq211}\\
 &\!\triangleq\!&\Gamma S_{i+1}\notag
\end{eqnarray}

Note $S_{i+1}$ is symmetric and any $\bar{d}$ rows of $\Gamma$ are linearly independent. So from \cite{Kumar2011} one can see all codewords of $\mathcal{W}^{(i)}$ for $i\in[0,u-l-1]$ fall in the same $(\bar{n},\bar{d},\bar{d})$ scalar MBR code $\{\Gamma S\mid S {\rm ~is~a~}\bar{d}\times\bar{d} {\rm ~symmetric~matrix~in~}F\}$.
\end{proof}

\begin{theorem}\label{thm11}
For any $e^*\in[0,\bar{n}-1]$, let $e^*$ be the host rack that contains $h\leq u-l$ failed nodes $\mathcal{F}=\{(e^*,g_1),...,(e^*,g_{h})\}$. Then
for each codeword $C\in\mathcal{C}_{\scriptscriptstyle{\textsf{MET-MBRR}}}$, and for any $\bar{d}$ helper racks $h_1,...,h_{\bar{d}}\in [0,\bar{n}-1]\setminus\{e^*\}$, there exist vectors ${\bm s}_i\in F^h$ computed from $C$ restricted to the entries of the rows labelled by $(h_i,g)$ for all $g\in[0,u-1]$ and $i\in[\bar{d}]$, such that  $C_{\mathcal{F}}$ can be computed from ${\bm s}_1,...,{\bm s}_{\bar{d}}$ and any surviving $l$ rows of $C$ in rack $e^*$.
\end{theorem}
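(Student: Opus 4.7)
The statement is the MBR counterpart of Theorem~\ref{thm5} and Corollary~\ref{coro6}, and my plan is to imitate those proofs with the rack-level $(\bar{n},\bar{d},\bar{d})$ MBR code provided by Proposition~\ref{prop10} replacing the rack-level $[\bar{n},\bar{d}]$ MDS code used in the MSRR case. As before, the bridge is: recover $\bm{w}^{(i)}_{e^*}$ for $i\in[0,u-l-1]$ from cross-rack transmissions, then solve a local system inside the host rack to extract $C_{\mathcal{F}}$.

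First I would write down the local relation inside rack $e$. Stacking the row vectors $\bm{w}^{(i)}_{e}\in F^{\bar{d}}$ and $\bm{c}_{(e,g)}\in F^{\bar{d}}$ as rows of matrices $W_e$ and $C_e$, Definition~\ref{def9} yields $W_e=\Lambda^*(e)\,C_e$, where $\Lambda^*(e)\in F^{(u-l)\times u}$ has $(i,g)$-entry $u^{-1}\xi^{-e(l+i)}\eta^{-(l+i)g}$. Factoring the row scalar $u^{-1}\xi^{-e(l+i)}$ and the column scalar $\eta^{-lg}$ exposes a core $(\eta^{-ig})_{i,g}$ whose $u-l$ selected columns form a standard Vandermonde in the distinct powers $\eta^{-g_1},\ldots,\eta^{-g_{u-l}}$, so any $u-l$ columns of $\Lambda^*(e)$ are linearly independent. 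Taking $e=e^*$ and inverting the $(u-l)\times(u-l)$ block indexed by $g_1,\ldots,g_{u-l}$ therefore expresses $C_{\mathcal{F}}$ in terms of the $u-l$ vectors $\bm{w}^{(i)}_{e^*}$ together with any $l$ surviving rows in rack $e^*$, which handles the case $h=u-l$.

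Next, the helper contributions $\bm{s}_j$ arise from the single-node repair scheme of the rack-level MBR code. For every $i\in[0,u-l-1]$, the product-matrix structure identified in the proof of Proposition~\ref{prop10} supplies a linear functional $\phi^{(i)}_{e^*}:F^{\bar{d}}\to F$ such that $\phi^{(i)}_{e^*}(\bm{w}^{(i)}_{h_1}),\ldots,\phi^{(i)}_{e^*}(\bm{w}^{(i)}_{h_{\bar{d}}})$ jointly determine $\bm{w}^{(i)}_{e^*}$. Since each $\bm{w}^{(i)}_{h_j}$ is the fixed linear combination of $\bm{c}_{(h_j,0)},\ldots,\bm{c}_{(h_j,u-1)}$ prescribed by Definition~\ref{def9}, the scalar $\phi^{(i)}_{e^*}(\bm{w}^{(i)}_{h_j})$ is computable within rack $h_j$ alone. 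Packing the resulting $u-l$ scalars into $\bm{s}_j\in F^{u-l}$ settles the case $h=u-l$.

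For $h<u-l$ I would reduce to the previous case exactly as in Corollary~\ref{coro6}. The same Vandermonde argument shows the $h$ columns of $\Lambda^*(e^*)$ indexed by $g_1,\ldots,g_h$ are linearly independent, so one can fix an $h\times(u-l)$ matrix $A^*$ of rank $h$ with $A^*\Lambda^*(e^*)=(I_h\mid 0\mid P)$, where the zero block aligns with $u-l-h$ chosen surviving columns and $P$ occupies the $l$ local-helper columns. The rack-wise combinations $\bm{v}^{(i)}_{e}=\sum_{j=0}^{u-l-1}A^*_{ij}\bm{w}^{(j)}_{e}$ remain codewords of the same $(\bar{n},\bar{d},\bar{d})$ MBR code by linearity, so applying the MBR repair of the previous paragraph to each $\bm{v}^{(i)}$ recovers $\bm{v}^{(i)}_{e^*}$ from $\bar{d}$ helpers contributing one scalar each; repackaging gives $\bm{s}_j\in F^h$, and the identity block of $A^*\Lambda^*(e^*)$ combined with the $l$ surviving local rows yields $C_{\mathcal{F}}$. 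The chief technical point to verify is the generalized-Vandermonde invertibility underlying both the local system and the existence of $A^*$; once this is in place, the remainder is a direct translation of the MSRR argument into the product-matrix MBR language.
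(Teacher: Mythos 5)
Your proposal is correct and follows essentially the same route as the paper's proof: you write the local relation $W_e=\Delta C_e$ with the paper's matrix $\Delta$ from (\ref{eq21}), invoke the Vandermonde independence of any $u-l$ columns, recover $\bm{w}^{(i)}_{e^*}$ through the rack-level $(\bar{n},\bar{d},\bar{d})$ MBR repair of Proposition \ref{prop10}, and reduce the case $h<u-l$ via the same $A^*$ transformation as in Corollary \ref{coro6}. The only difference is cosmetic: you spell out the row/column rescaling that exposes the Vandermonde core, a step the paper asserts without elaboration.
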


\begin{proof}
By Definition \ref{def9} it can be seen for $e\in[0,\bar{n}-1]$
\begin{equation}
\begin{pmatrix}
  {\bm w}_e^{(0)}\\{\bm w}_e^{(1)}\\\vdots\\{\bm w}_e^{(u-l-1)}\end{pmatrix} =\Delta C_e\;,\label{eq21}
\end{equation}
where
\begin{equation*}
\Delta=u^{-1}\begin{pmatrix}
    \xi^{-el}&&&\\&\xi^{-\!e(l+1)}&&\\&&\ddots&\\&&&\xi^{-\!e(u\!-\!1)}
  \end{pmatrix}
  \begin{pmatrix}
    1&\eta^{-\!l}&\cdots&\eta^{-\!(u\!-\!1)l}\\
    1&\eta^{-\!(l\!+\!1)}&\cdots&\eta^{-\!(u\!-\!1)(l\!+\!1)}\\
    \vdots&\vdots&\vdots&\vdots\\
    1&\eta^{-\!(u\!-\!1)}&\cdots&\eta^{-\!(u\!-\!1)^2}
  \end{pmatrix}\;.
\end{equation*}
Note any $u-l$ columns of $\Delta$ are linearly independent. Then similar to Theorem \ref{thm5}, any $u-l$ rows in $C_{e^*}$ can be computed from the remaining $l$ rows in $C_{e^*}$ and ${\bm w}_{e^*}^{(0)},...,{\bm w}_{e^*}^{(u-l-1)}$. By Proposition \ref{prop10}, ${\bm w}_{e^*}^{(i)}$ can be recovered from $\bar{d}$ helper racks each transferring one symbol which is computed from the data stored in all nodes in that rack. Furthermore, similar to Corollary \ref{coro6}, to repair any $h\leq u-l$ nodes in rack $e^*$ we can apply a linear transformation $A^*$ to both sides of (\ref{eq21}) for all $e\in[0,\bar{n}-1]$ such that $\Delta^*=A^*\Delta$ has zero columns on the non-helper nodes in rack $e^*$. Note the same linear transformation applies to all coordinates of the $u-l$ rack-level codewords, thus the resulting $h$ rack-level codewords are still in the $(\bar{n},\bar{d},\bar{d})$ MBR code and the repair process is accomplished as before.
\end{proof}

\subsection{A systematic version of the MET-MBRR code}\label{sec5}
As in Section \ref{sec3c}, in the following we provide a systematic version of the $\mathcal{C}_{\scriptscriptstyle{\textsf{MET-MBRR}}}$ defined in (\ref{eq17}) by first describing an information set and then developing a systematic encoding process.

Let $\hat{k}=\bar{k}u+\tilde{u}_0$ and we set the first $\hat{k}$ nodes to be systematic nodes. Note $\mathcal{C}_{\scriptscriptstyle{\textsf{MET-MBRR}}}$ has sub-packetization $\alpha=\bar{d}$, thus each coordinate is denoted by a tuple $[(e,g),a]$, where $(e,g)\in[0,\bar{n}-1]\times[0,g-1]$ and $a\in[0,\bar{d}-1]$. Actually, the information set $X$ is composed of the $\hat{k}\bar{d}$ coordinates of the first $\hat{k}$ nodes excluding some coordinates. The specific definition of $X$ is given in Theorem \ref{thm15}.
As an illustration, we denote the excluded coordinates by shadowed frames and display the definition of $X$ in Fig. \ref{fg2}. Note from (\ref{eq133}) we know $\mathcal{C}_{\scriptscriptstyle{\textsf{MET-MBRR}}}$ has $B=(\bar{k}u+\tilde{u}_0)\bar{d}-(\bar{k}\bar{d}-\frac{\bar{d}(\bar{d}+1)}{2})(u-l)$. One can check the number of excluded coordinates from the first $\hat{k}$ nodes in the definition of $X$ is exactly $(\bar{k}\bar{d}-\frac{\bar{d}(\bar{d}+1)}{2})(u-l)$.
\begin{figure}[H]
\begin{center}
\includegraphics[width=0.9\columnwidth]{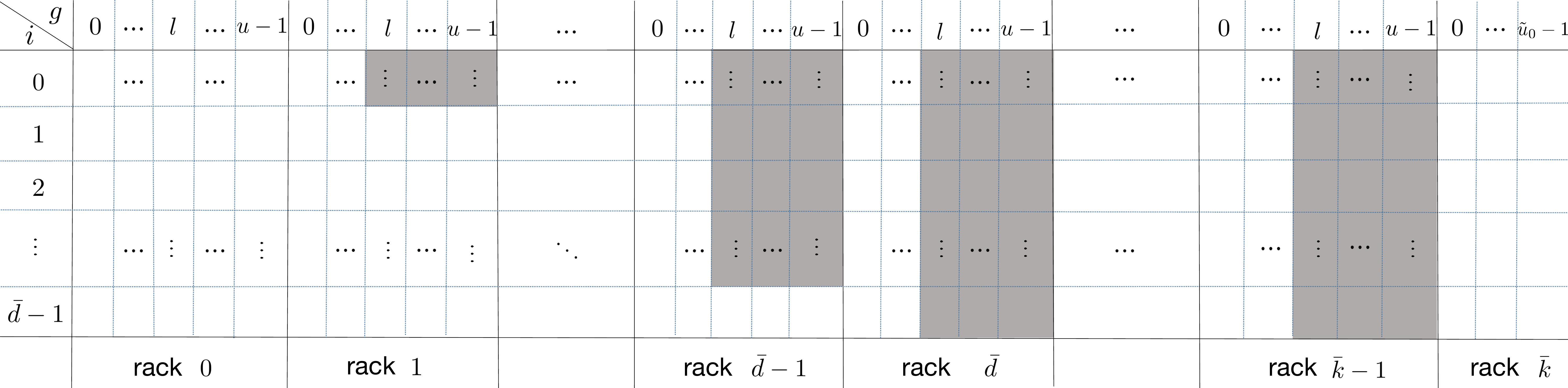}
\end{center}
\caption{The information set $X$ of $\mathcal{C}_{\scriptscriptstyle{\textsf{MET-MBRR}}}$ consists of all coordinates of the first $\hat{k}$ nodes excluding the shadowed positions.}\label{fg2}
\end{figure}

\begin{theorem}\label{thm15}
Let $X$ be a set of coordinates defined as follows
\begin{equation}\label{eq24}
X=X_1\cup X_2\cup X_3
\end{equation} where
\begin{equation*}
  X_i=\begin{cases}
  \{[(e,g),a]\mid (e,g)\in [0,\bar{k}-1]\times [0,l-1], a\in [0,\bar{d}-1]\}\;,~~~~~~~~~{\rm for~}i=1\;;\\
  \{[(e,g),a]\mid (e,g)\in [0,\bar{d}-1]\times [l,u-1], a\in [e,\bar{d}-1]\}\;,~~~~~~~~~{\rm for~}i=2\;;\\
  \{[(\bar{k},g),a]\mid g\in[0,\tilde{u}_0-1], a\in [0,\bar{d}-1]\}\;,~~~~~~~~~~~~~~~~~~~~~~~~~~{\rm for~}i=3\;.
  \end{cases}
\end{equation*}	
Then $X$ is an information set of $\mathcal{C}_{\scriptscriptstyle{\textsf{MET-MBRR}}}$.
\end{theorem}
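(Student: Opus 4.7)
The plan is to show that whenever a codeword $C=\Lambda M\in\mathcal{C}_{\scriptscriptstyle{\textsf{MET-MBRR}}}$ satisfies $C|_X=0$, one necessarily has $M=0$; since $|X|=\bar{k}l\bar{d}+\tilde{u}_0\bar{d}+(u-l)\bar{d}(\bar{d}+1)/2=B$ matches the number of free data symbols encoded in $M$, this implication is exactly the statement that $X$ is an information set. The approach parallels the proof of Theorem \ref{thm13}: first use the rack-level machinery of Definition \ref{def9} and Proposition \ref{prop10} to conclude each $S_{i+1}=0$, then lift this to $C_e=0$ for the systematic racks, and finally invoke the invertible Vandermonde factor $\Lambda_{[0,\hat{k}-1]}$.

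I begin by computing rack-level vectors for the first $\bar{d}$ racks. For any $e\in[0,\bar{d}-1]$, $i\in[0,u-l-1]$ and $a\in[e,\bar{d}-1]$, Definition \ref{def9} gives
\[
({\bm w}_e^{(i)})_a = u^{-1}\xi^{-e(l+i)}\sum_{g=0}^{u-1}\eta^{-(l+i)g}\,c_{[(e,g),a]}.
\]
Every summand vanishes under the hypothesis $C|_X=0$: for $g\in[0,l-1]$ by $X_1$, and for $g\in[l,u-1]$ by $X_2$ (which applies because $e\leq\bar{d}-1$ and $a\geq e$). Hence $({\bm w}_e^{(i)})_a=0$ for all $0\leq e\leq a\leq\bar{d}-1$, which by Proposition \ref{prop10} translates into the following linear system on the symmetric matrix $S_{i+1}$:
\[
\sum_{b=0}^{\bar{d}-1}\xi^{beu}(S_{i+1})_{b,a}=0,\qquad 0\leq e\leq a\leq\bar{d}-1.
\]

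The main obstacle is showing that these $\bar{d}(\bar{d}+1)/2$ equations, combined with the symmetry of $S_{i+1}$, force $S_{i+1}=0$. I plan to argue by induction on $\bar{d}$: the $\bar{d}$ equations corresponding to $a=\bar{d}-1$ and $e\in[0,\bar{d}-1]$ form a $\bar{d}\times\bar{d}$ Vandermonde system on the last column $(S_{i+1})_{\,\cdot\,,\bar{d}-1}$ with nodes $1,\xi^u,\ldots,\xi^{(\bar{d}-1)u}$ (distinct since $\xi^u$ has order $(|F|-1)/u\geq\bar{n}\geq\bar{d}$), so this column vanishes; by symmetry the last row vanishes as well, the terms $\xi^{(\bar{d}-1)eu}(S_{i+1})_{\bar{d}-1,a}$ drop out of the remaining equations, and the principal $(\bar{d}-1)\times(\bar{d}-1)$ submatrix is subject to the same hypothesis at one lower dimension. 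The base case $\bar{d}=1$ is immediate. Thus $S_{i+1}=0$ for every $i\in[0,u-l-1]$, and by Proposition \ref{prop10} we get ${\bm w}_e^{(i)}=0$ for every $e\in[0,\bar{n}-1]$ and every $i$.

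To finish, I combine \eqref{eq21}, which now reads $\Delta C_e=0$ for every $e$, with the remaining constraints from $X_1$ and $X_3$. For $e\in[0,\bar{k}-1]$ the first $l$ rows of $C_e$ are zero by $X_1$, and the $u\times u$ matrix obtained by stacking $\Delta$ on top of $(I_l\mid 0_{l\times(u-l)})$ is invertible: a vector in its kernel must have its first $l$ coordinates zero (from the bottom block), which reduces $\Delta v=0$ to a system in the last $u-l$ columns of $\Delta$, and those columns are linearly independent by the property invoked in the proof of Theorem \ref{thm11}. Hence $C_e=0$ for every $e\in[0,\bar{k}-1]$, and $X_3$ directly gives $c_{(\bar{k},g)}=0$ for $g\in[0,\tilde{u}_0-1]$. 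Consequently the first $\hat{k}=\bar{k}u+\tilde{u}_0$ rows of $C=\Lambda M$ vanish, and since the restriction of $\Lambda$ to these rows is an invertible $\hat{k}\times\hat{k}$ Vandermonde matrix, $M=0$ as required.
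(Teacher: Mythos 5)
Your proof is correct and follows essentially the same route as the paper: use $C|_{X_1\cup X_2}=0$ together with Definition \ref{def9} to zero out the triangular part of $\Gamma S_{i+1}$, then exploit symmetry of $S_{i+1}$ to conclude $S_{i+1}=0$, feed this through Proposition \ref{prop10} to get ${\bm w}(C)^{(i)}=0$ for all $i$, use the system $\Delta C_e=0$ plus $X_1$ to zero out $C_e$ for $e\in[0,\bar k-1]$, add $X_3$, and finish with the invertible Vandermonde block of $\Lambda$. The only cosmetic differences are that you phrase the recovery of the symmetric matrix as an explicit induction on $\bar d$ (the paper phrases it as a right-to-left column sweep, which is the same computation) and you verify $C_e=0$ by stacking $\Delta$ over $(I_l\mid 0)$ rather than invoking that $\ker\Delta$ is a $[u,l]$ MDS code; both are equivalent since invertibility of that $u\times u$ matrix is exactly MDS-ness of $\ker\Delta$ restricted to deleting the first $l$ coordinates.
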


\begin{proof}
Similar to the proof of Theorem \ref{thm13}, it suffices to prove for any codeword $C=\Lambda M\in\mathcal{C}_{\scriptscriptstyle{\textsf{MET-MBRR}}}$,  $C_X={\bm 0}$ always implies $C=0$.

For $i\!\in\![0,u\!-\!l\!-\!1]$, let ${\bm w}(C)^{(i)}=({\bm w}^{(i)}_0,...,{\bm w}^{(i)}_{\bar{n}-1})$ be the rack-level codeword associated with $C$ according to Definition \ref{def9}. Furthermore, we write ${\bm w}_{e}^{(i)}=({w}^{(i)}_{e,0},...,{w}^{(i)}_{e,\bar{d}-1})$ for $e\in[0,\bar{n}-1]$ and $i\in[0,u-l-1]$. Then from Definition \ref{def9} and the hypothesis $C_{X_1\cup X_2}=0$, we conclude ${w}^{(i)}_{e,a}=0$ for $e\in[0,\bar{d}-1]$, $i\in[0,u-l-1]$ and $a\in [e,\bar{d}-1]$. Combining with (\ref{eq211}), we have
\begin{equation}
\begin{pmatrix}1&1&\cdots&1\\1&\xi^{u}&\cdots&\xi^{(\bar{d}-1)u}\\\vdots&\vdots&\vdots&\vdots\\1&(\xi^{u})^{\bar{d}-1}&\cdots&(\xi^{(\bar{d}-1)u})^{\bar{d}-1}
\end{pmatrix}S_{i+1}=
\left(\begin{array}{cccc}
0 & 0 & \cdots & 0 \\
&0& \cdots &0\\
&& \ddots & \vdots \\
\multicolumn{2}{c}{\raisebox{1.3ex}[0pt]{\Huge*}}
&&0
\end{array}\right)\;,\label{eq25}\end{equation}
where $*$ on the right hand of (\ref{eq25}) denotes unknown entries.  Note in (\ref{eq25}) we only list the first ${\bar{d}}$ equations of the system (\ref{eq211}). Although there exist unknown entries on the right hand of (\ref{eq25}), the known entries are sufficient to recover the matrix $S_{i+1}$ because $S_{i+1}$ is a symmetric matrix. Specifically, we can recover $S_{i+1}$ column by column from right to left. For $t\in[\bar{d}]$, after recovery of the $t$ columns of $S_{i+1}$ from the right, we also obtain the lower $t$ entries of the $(t+1)$-th column (counting from the right) of $S_{i+1}$  by the symmetry. Therefore, it is enough to use the known $\bar{d}-t$ entries in the $(t+1)$-th column of the right hand of (\ref{eq25}) to recover the remaining $\bar{d}-t$ entries of the $(t+1)$-th column of $S_{i+1}$. Particularly in (\ref{eq25}), the known entries on the right hand are all zeroes, then one can finally recover $S_{i+1}=0$ for $i\in[0,u-l-1]$.

After recovery of $S_{i+1}$, we can fill up unknown positions on the right side of (\ref{eq25}). Specifically, we get ${\bm w}^{(i)}_{e}={\bm 0}$ for $e\in[0,\bar{d}-1]$ and $i\in[0,u-l-1]$. By Proposition \ref{prop10}, ${\bm w}(C)^{(i)}$ falls in an $(\bar{n},\bar{d},\bar{d})$ MBR code. As a result, ${\bm w}(C)^{(i)}={\bm 0}$ for all $i\in[0,u-l-1]$.

Then combining with \eqref{eq21}, for all $e\in[0,\bar{n}-1]$, $C_e$ falls in an $[u,l]$ MDS array code which is a direct sum of $\bar{d}$ GRS codes. From the hypothesis $C_{X_1}=0$, it implies ${\bm c}_{(e,g)}={\bm 0}$ for all $e\in[0,\bar{k}-1]$ and $g\in[0,u-1]$. Combining with the hypothesis $C_{X_3}=0$, it follows the first $\hat{k}$ nodes all store zeros. Then by the data reconstruction process of $\mathcal{C}_{\scriptscriptstyle{\textsf{MET-MBRR}}}$, one can finally recover $C=0$.
\end{proof}

Furthermore, the proof of Theorem \ref{thm15} implies a systematic encoding process for $\mathcal{C}_{\scriptscriptstyle{\textsf{MET-MBRR}}}$. The details are given in Algorithm \ref{alg2}. For any data file consisting of $B$ symbols, we first put the $B$ symbols orderly in the coordinates in $X$. Then the key step is to recover the coordinates corresponding to the shadowed positions in Fig. \ref{fg2}. For this purpose, we need to recover the symmetric matrix $S_i$ for $i\in[u-l]$ (see Line 2-6) and then compute the associated rack-level codeword (see Line 7-11).
After the key step, all coordinates of the first $\hat{k}$ nodes have been determined. Then by the data reconstruction of $\mathcal{C}_{\scriptscriptstyle{\textsf{MET-MBRR}}}$ we can recover the matrix $M$ such that $C=\Lambda M$ is the desired codeword.

\begin{algorithm}[ht]
\caption{The systematic encoding process of $\mathcal{C}_{\scriptscriptstyle{\textsf{MET-MBRR}}}$}\label{alg2}
\begin{algorithmic}[1]
\REQUIRE Original data symbols $s_1,...,s_B\in F$.
\ENSURE A codeword $C\in \mathcal{C}_{\scriptscriptstyle{\textsf{MET-MBRR}}}$ such that $C_X=(s_1,...,s_B)$, where $X$ is defined in (\ref{eq24}).
\STATE Set $C_X=(s_1,...,s_B)$;
\FOR{$e\in[0,\bar{d}-1]$}
\FOR{$a\in[e,\bar{d}-1]$}
\STATE Compute $w_{e,a}^{(i)}$ for $i\in[0,u-l-1]$ from the coordinates of $C_{X_1\cup X_2}$ according to Definition \ref{def9};
\ENDFOR
\STATE Recover the symmetric matrix $S_{i+1}$ for $i\in[0,u-l-1]$ by the system (\ref{eq211});
\STATE Compute $w_{e,a}^{(i)}$  for $i\in[0,u-l-1]$ and $a\in[0,e-1]$ from (\ref{eq211}) and $S_{i+1}$;
\ENDFOR
\FOR {$i\in[0,u-l-1]$}
\STATE Recover ${\bm w}^{(i)}_{e}$ for $e\in[\bar{d},\bar{k}-1]$ from ${\bm w}^{(i)}_{0},...,{\bm w}^{(i)}_{\bar{d}-1}$ by the node repair process of the $(\bar{n},\bar{d},\bar{d})$ MBR code defined in Proposition \ref{prop10};
\ENDFOR
\FOR{$e\in[\bar{k}-1]$}
\STATE Determine the coordinates of $C$ at the shadowed positions in Fig. \ref{fg2} by the equation (\ref{eq21});
\ENDFOR
\STATE Determine the remaining coordinates of $C$ by the data reconstruction process.
\end{algorithmic}
\end{algorithm}

\section{Conclusion}
In this work we focus on reducing the repair bandwidth of erasure codes for rack-aware storage systems in front of multiple node failures. Specifically, we propose the MET-RRC that achieves the optimal repair bandwidth when repairing $h\leq u-l$ node failures in one rack from $l$ local helper nodes and $\bar{d}$ helper nodes. Since $\bar{d}$ could be much less than the total number of racks, the multiple erasure tolerance is highly improved. Moreover, for the codes with the minimum storage and minimum bandwidth respectively, i.e., the MET-MSRR codes and MET-MBRR codes, we give explicit constructions for all parameters with the lowest sub-packetization over a field of size comparable to the code length. The MET-RRC model as well as the explicit codes provide a good option for supporting small data chunks in rack-aware (or clustered) storage systems.

\end{document}